  \providecommand\BibTeX{{%
    \normalfont B\kern-0.5em{\scshape i\kern-0.25em b}\kern-0.8em\TeX}}}
\newtheorem{lemma}{Lemma}
\newtheorem{theorem}{Theorem}
\newcommand{\fc} {\mathcal{C}}
\newcommand{\overbar}[1]{\mkern 1.5mu\overline{\mkern-1.5mu#1\mkern-1.5mu}\mkern 1.5mu}
\newcommand{\timeout}{\texttt{timeout}\xspace}
\newcommand{\hash} {H}
\newcommand{\hqc} {qc_{high}}
\newcommand{\rank} {rank}
\newcommand{\fblock} {\overbar{B}}
\newcommand{\fqc} {\overbar{qc}}
\newcommand{\ftc} {\overbar{tc}}
\newcommand{\coinqc} {qc_{coin}}
\newcommand{\fvotedround} {\overbar{r}_{vote}}
\newcommand{\fvotedheight} {\overbar{h}_{vote}}
\newcommand{\mode} {\texttt{fallback-mode}}
\newcommand\sig[1]{\langle #1 \rangle\xspace}
\newcommand\thsig[1]{\{ #1 \}\xspace}
\newcommand\block[1]{[ #1 ]\xspace}
\algnewcommand{\LineComment}[1]{\Statex \hskip\ALG@thistlm {\color{gray}\textrm{// #1}}}
\newtcolorbox{mybox}[1][]{
enhanced,
colback=white,
boxsep=0pt,
#1
}
\renewcommand{\paragraph}[1]{\smallskip\noindent{\textbf{#1}}}
\newcommand\red[1]{\textcolor{red}{#1}\xspace}
\begin{document}

\title{Be Prepared When Network Goes Bad: An Asynchronous View-Change Protocol}

\author{Rati Gelashvili}
\affiliation{%
  \institution{Novi}
  \country{USA}
}

\author{Lefteris Kokoris-Kogias}
\affiliation{%
  \institution{Novi \& IST Austria}
  \country{USA}
}

\author{Alexander Spiegelman}
\affiliation{%
  \institution{Novi}
  \country{USA}
}

\author{Zhuolun Xiang}
\authornote{Work done at Novi.}
\affiliation{%
  \institution{University of Illinois at Urbana-Champaign}
  \country{USA}
}
\email{xiangzl@illinois.com}



\begin{abstract}
The popularity of permissioned blockchain systems demands BFT SMR protocols that are efficient under good network conditions (synchrony) and robust under bad network conditions (asynchrony).
The state-of-the-art partially synchronous BFT SMR protocols provide optimal linear communication cost per decision under synchrony and good leaders, but lose liveness under asynchrony.
On the other hand, the state-of-the-art asynchronous BFT SMR protocols are live even under asynchrony, but always pay quadratic cost even under synchrony.
In this paper, we propose a BFT SMR protocol that achieves the best of both worlds --
optimal linear cost per decision under good networks and leaders,
optimal quadratic cost per decision under bad networks, 
and remains always live.
\end{abstract}

\begin{CCSXML}
<ccs2012>
   <concept>
       <concept_id>10003752.10003809.10010172</concept_id>
       <concept_desc>Theory of computation~Distributed algorithms</concept_desc>
       <concept_significance>500</concept_significance>
       </concept>
   <concept>
       <concept_id>10002978.10003006.10003013</concept_id>
       <concept_desc>Security and privacy~Distributed systems security</concept_desc>
       <concept_significance>300</concept_significance>
       </concept>
 </ccs2012>
\end{CCSXML}

\ccsdesc[500]{Theory of computation~Distributed algorithms}
\ccsdesc[300]{Security and privacy~Distributed systems security}

\keywords{Byzantine faults, state machine replication, asynchrony, view-change protocol, optimal}

\maketitle

\section{Introduction}

The popularity of blockchain protocols has generated a surge in researching how to increase their efficiency and robustness. On increasing the efficiency front, Hotstuff~\cite{yin2019hotstuff} has managed to achieve linear communication complexity, hitting the limit for deterministic consensus protocols~\cite{spiegelman2020search}. This linearity, however, is only provided during good network conditions and when the leader is honest. Otherwise a quadratic pacemaker protocol, described in a production version of HotStuff named DiemBFT~\cite{baudet2019state}, helps slow parties catch-up. In the worse case of an asynchronous adversary there is no liveness guarantee.

For robustness, randomized asynchronous protocols VABA~\cite{abraham2019asymptotically}, Dumbo~\cite{lu2020dumbo} and ACE~\cite{spiegelman2019ace} are the state of the art. The main idea of these protocols is to make progress as if every node is the leader and retroactively decide on a leader. As a result, the network adversary only has a small probability of guessing which nodes to corrupt. To achieve this, however, these protocols pay a quadratic communication cost even under good network conditions.

In this paper we achieve the best of both worlds. When the network behaves well and the consensus makes progress we run DiemBFT~\cite{baudet2019state}, 
adopting its linear communication complexity. However, when there is a problem and the majority of nodes cannot reach the leader we proactively run an asynchronous view-change protocol that makes progress even under the strongest network adversary. As a result, depending on the conditions we pay the appropriate cost while at the same time always staying live.

In Table~\ref{table:comparison} we show a comparison of our work with the state of the art.
In addition to the related work HotStuff, DiemBFT, VABA and ACE, a recent work~\cite{spiegelman2020search} proposes a single-shot validated BA that has quadratic cost under asynchrony, with a synchronous fast path of cost $O(nf')$ under $f'$ actual faults, and $O(n)$ when there are no faults.
Inspired by~\cite{spiegelman2020search}, we propose a protocol for BFT state machine replication that achieves the best of both worlds, and keeps the amortized efficiency and simplicity of the chain-based BFT SMR protocols such as HotStuff and DiemBFT. 
More related work can be found in Appendix~\ref{sec:relatedwork}. Additionally we show in the Appendix~\ref{app:2-chain} how to decrease the latency of our consensus protocol from 3-chain to 2-chain, leveraging the quadratic cost that the pacemaker anyway imposes during bad network conditions.

\begin{table}[h]
\centering
\setlength\doublerulesep{0.5pt}
\begin{tabular}{|c|c|c|c|}
\hline
                 & {\bf Problem}        & {\bf Comm. Complexity} & {\bf Liveness}                          \\ \hhline{====}
HotStuff~\cite{yin2019hotstuff}/Diem~\cite{baudet2019state}    & partially sync SMR    &  sync $O(n)$  & not live if async              \\ \hline
VABA~\cite{abraham2019asymptotically}/Dumbo~\cite{lu2020dumbo}  & async BA     & $O(n^2)$  & always live                                   \\ \hline

ACE~\cite{spiegelman2019ace}              & async SMR & $O(n^2)$ & always live                                   \\ \hline
Spiegelman'20~\cite{spiegelman2020search}    & async BA with fast sync path     & sync $O(nf')$, async $O(n^2)$  & always live          \\ \hhline{====}
Ours             & async SMR with fast sync path    &  sync $O(n)$, async $O(n^2)$ & always live \\ \hline
\end{tabular}
\caption{Comparison to Related Works. For HotStuff/Diem and our protocol, sync $O(n)$ assumes honest leaders.}
\vspace{-4em}

\label{table:comparison}
\end{table}

\section{Preliminaries}
We consider a permissioned system that consists of an adversary and $n$ replicas numbered $1, 2, \ldots, n$, where each replica has a public key certified by a public-key infrastructure (PKI).
Some replicas are Byzantine with arbitrary behaviors and controlled by an adversary; the rest of the replicas are called honest.
The replicas have all-to-all reliable and authenticated communication channels controlled by the adversary.
An execution of a protocol is 
{\em synchronous} if all message delays between honest replicas are bounded by $\Delta$;
is {\em asynchronous} if they are unbounded;
and is {\em partially synchronous} if there is a global stabilization time (GST) after which they are bounded by $\Delta$.
Without loss of generality, we let $n=3f+1$ where $f$ denotes the assumed upper bound on the number of Byzantine faults, which is the optimal worst-case resilience bound for asynchrony, partial synchrony~\cite{dwork1988consensus}, or asynchronous protocols with fast synchronous path~\cite{blum2020network}.

\paragraph{Cryptographic primitives and assumptions.}
We assume standard digital signature and public-key infrastructure (PKI), and use $\sig{m}_i$ to denote a message $m$ signed by replica $i$.
We also assume a threshold signature scheme, where a set of signature shares for message $m$ from $t$ (the threshold) distinct replicas can be combined into one threshold signature of the same length for $m$.
We use $\thsig{m}_i$ to denote a threshold signature share of a message $m$ signed by replica $i$.
We also assume a collision-resistant cryptographic hash function $\hash(\cdot)$ that can map an input of arbitrary size to an output of fixed size. 
We assume a strongly adaptive adversary that is computationally bounded.
For simplicity, we assume the above cryptographic schemes are ideal and a trusted dealer equips replicas with the above cryptographic schemes. The dealer assumption can be lifted using the protocol of Kokoris-Kogias et al.~\cite{kokoris2020asynchronous}.

\paragraph{BFT SMR.}
A Byzantine fault tolerant state machine replication protocol~\cite{Abraham2020SyncHS} commits client transactions as a log akin to a single non-faulty server, and provides the following two guarantees:
\begin{itemize}[leftmargin=*,noitemsep,topsep=0pt]
    \item Safety. Honest replicas do not commit different transactions at the same log position. 
    \item Liveness. Each client transaction is eventually committed by all honest replicas.
\end{itemize}

\smallskip
Besides the two requirements above, a validated BFT SMR protocol also requires to satisfy the {\em external validity}~\cite{cachin2001secure}, which requires any committed transactions to be {\em externally valid}, satisfying some application-dependent predicate. This can be done by adding validity checks on the transactions before the replicas proposing or voting, and for brevity we omit the details and focus on the BFT SMR formulation defined above.
For most of the paper, we omit the client from the discussion and only focus on replicas.

\paragraph{Terminologies.} 
We define several terminologies commonly used in the literature~\cite{yin2019hotstuff,baudet2019state}.
\begin{itemize}[leftmargin=*,noitemsep,topsep=0pt]
    \item {\em Round Number and View Number.} 
    The protocol proceeds in rounds $r=1,2,3,...$, and each round $r$ has a designated leader $L_r$ to propose a new block of round $r$.
    Each replica keeps track of the current round number $r_{cur}$.
    In addition to the round number, every replica also keeps track of the current view number $v_{cur}$ of the protocol. 
    The original DiemBFT does not need to introduce the view number, 
    but it is convenient to have it here for the asynchronous view-change protocol later. The view number starts from $0$ and increments by $1$ after each asynchronous fallback (so in original DiemBFT it is always $0$ and can be neglected).
    
    \item {\em Block Format.}
    A block is formatted as $B=\block{id, qc, r, v, txn}$ where $qc$ is the quorum certificate of $B$'s parent block, $r$ is the round number of $B$, $v$ is the view number of $B$, $txn$ is a batch of new transactions, and $id=\hash(qc,r,v,txn)$ is the unique hash digest of $qc,r,v,txn$. For brevity, we will only specify $qc$, $r$ and $v$ for a new block, and $txn,id$ will follow the same definition above.
    We will use $B.x$ to denote the element $x$ of $B$.
    As mentioned, $B.v$ will always be $0$ for any block in the original DiemBFT description.
    
    \item {\em Quorum Certificate.}
    A quorum certificate (QC) of some block $B$ is a threshold signature of a message that includes $B.id, B.r, B.v$, produced by combining the signature shares $\thsig{B.id, B.r, B.v}$ from a quorum of replicas ($n-f=2f+1$ replicas). 
    We say a block is certified if there exists a QC for the block. 
    The round/view number of $qc$ for $B$ is denoted by $qc.r$/$qc.v$, which equals $B.r$/$B.v$.
    A QC or a block of view number $v$ and round number $r$ has rank $\rank=(v, r)$. QCs or blocks are ranked first by the view number and then by the round number, i.e., $qc_1.rank>qc_2.rank$ if $qc_1.v>qc_2.v$, or $qc_1.v=qc_2.v$ and $qc_1.r>qc_2.r$.
    The function $\max(\rank_1,\rank_2)$ returns the higher rank between $\rank_1$ and $\rank_2$, and
    $\max(qc_1,qc_2)$ returns the higher ranked QC between $qc_1$ and $qc_2$.
    As mentioned, in the original DiemBFT, the view number is always $0$, thus QCs or blocks are ranked only by round numbers.
    
    \item {\em Timeout Certificate.}
    A timeout certificate (TC) is a threshold signature on a round number $r$, produced by combining the signature shares $\thsig{r}$ from a quorum of replicas ($n-f=2f+1$ replicas). 
    
\end{itemize}

\subsection{DiemBFT}

\begin{figure}[t!]
    \centering
    \input{diembft-protocol}
    \caption{DiemBFT Protocol.}
    \vspace{-2em}

    \label{fig:diembft}
\end{figure}

\paragraph{Description of the DiemBFT Protocol.}
The DiemBFT protocol (also known as LibraBFT)~\cite{baudet2019state} is a production version of HotStuff~\cite{yin2019hotstuff} with a synchronizer implementation (Pacemaker), as shown in Figure~\ref{fig:diembft}.
There are two components of DiemBFT, a {\em Steady State protocol} that aims to make progress when the round leader is honest, and a {\em Pacemaker (view-change) protocol} that advances the round number either due to the lack of progress or the current round being completed. 
The leader $L_r$, once entering the round $r$, will propose a block $B$ extending the block certified by the highest QC $\hqc$. 
When receiving the first valid round-$r$ block from $L_r$, any replica tries to advance its current round number, update its highest locked rank and its highest QC, and checks if any block can be committed.
A block can be committed if it is the first block among $3$ adjacent certified blocks with consecutive round numbers. 
After the above steps, the replica will vote for $B$ by sending a threshold signature share to the next leader $L_{r+1}$, if the voting rules are satisfied.
Then, when the next leader $L_{r+1}$ receives $2f+1$ votes above that form a QC of round $r$, it enters round $r+1$ and proposes the block for that round, and the above steps repeat.
When the timer of some round $r$ expires, the replica stops voting for that round and multicast a timeout message containing a threshold signature share for $r$ and its highest QC.
When any replica receives $2f+1$ such timeout messages that form a TC of round $r$, it enters round $r+1$ and sends the TC to the leader $L_{r+1}$.
For space limitation, we omit the correctness proof of the DiemBFT protocol, which can be found in~\cite{baudet2019state}.

The DiemBFT protocol has linear communication complexity per round under synchrony and honest leaders due to the leader-to-all communication pattern and the use of threshold signature scheme\footnote{Although in the implementation, DiemBFT does not use threshold signature, in this paper we consider a version of DiemBFT that uses the threshold signature scheme.}, and quadratic communication complexity for synchronization caused by asynchrony or bad leaders due to the all-to-all multicast of timeout messages.
Note that during the periods of asynchrony, the protocol has {\em no liveness guarantees} -- the leaders may always suffer from network delays, replicas keep multicasting timeout messages and advancing round numbers, and no block can be certified or committed.
To improve the liveness of the DiemBFT protocol and similarly other partially synchronous BFT protocols, we propose a new view-change protocol in the next section named the {\em Asynchronous Fallback protocol}, which has the optimal quadratic communication complexity and always makes progress even under asynchrony.

\section{An Asynchronous View-change Protocol}
\label{sec:fallback}

To strengthen the liveness guarantees of existing partially synchronous BFT protocols such as DiemBFT~\cite{baudet2019state}, we propose an asynchronous view-change protocol (also called asynchronous fallback~\cite{spiegelman2020search}) that has quadratic communication complexity and always makes progress even under asynchrony.
The Asynchronous Fallback protocol is presented in Figure~\ref{fig:fallback_protocol}, which can replace the Pacemaker protocol in DiemBFT (Figure~\ref{fig:diembft}) to obtain a BFT protocol that has linear communication cost for the synchronous path, quadratic cost for the asynchronous path, and is always live.

\paragraph{Additional Terminologies.}
\begin{itemize}[leftmargin=*,noitemsep,topsep=0pt]

    \item {\em Fallback-block and Fallback-chain.}
    For the Asynchronous Fallback protocol, we define another type of block named {\em fallback-block (f-block)}, denoted as $\fblock$. In contrast, the block defined earlier is called the {\em regular block}.
    An f-block $\fblock$ adds two additional fields to a regular block, formatted as $\fblock=\block{B, height, proposer}$ where $B$ is a regular block, $height \in \{1,2,3\}$ is the position of the f-block in the fallback-chain and $proposer$ is the replica that proposes the block.
    We will use $\fblock_{h,i}$ to denote a height-$h$ f-block proposed by replica $i$.
    A fallback-chain (f-chain) consists of f-blocks.
    In the fallback protocol, every replica will construct its fallback-chain (f-chain) with f-blocks, extending the block certified by its $\hqc$.
    
    \item {\em Fallback-QC.}
    A fallback quorum certificate (f-QC) $\fqc$ for an f-block $\fblock_{h,i}$ is a threshold signature for the message $( \fblock.id, \fblock.r, \fblock.v, h, i)$, produced by combining the signature shares $\thsig{\fblock.id, \fblock.r, \fblock.v, h, i}$ from a quorum of replicas ($n-f=2f+1$ replicas). 
    An f-block is certified if there exists an f-QC for the f-block.
    f-QCs or f-blocks are first ranked by view numbers and then by round numbers.
    
    \item {\em Fallback-TC.}
    A fallback timeout certificate (f-TC) $\ftc$ is a threshold signature for a view number $v$, produced by combining the signature shares $\thsig{v}$ from a quorum of replicas ($n-f=2f+1$ replicas). 
    f-TCs are ranked by view numbers. 

    \item {\em Leader Election and Coin-QC.}
    We assume a black-box leader election primitive such as~\cite{loss2018combining}, that can randomly elect a unique leader $L$ among $n$ replicas with probability $1/n$ for any view $v$, by combining $f+1$ valid coin shares of view $v$ from any $f+1$ distinct replicas.
    A coin-QC $\coinqc$ of view $v$ is formed with these $f+1$ coin shares, each produced by a replica.
    The probability of the adversary to predict the outcome of the election is at most $1/n$ (we assume the cryptographic schemes are ideal). 

    \item {\em Endorsed Fallback-QC and Endorsed Fallback-block.}
    Once a replica has a coin-QC $\coinqc$ of view $v$ that elects replica $L$ as the leader, we say any f-QC of view $v$ by replica $L$ is endorsed (by $\coinqc$), and the f-block certified by the f-QC is also endorsed (by $\coinqc$).
    Any endorsed f-QC is handled as a QC in any steps of the protocol such as {\em Lock, Commit, Advance Round}.
    {\em An endorsed f-QC ranked higher than any QC have with the same view number.}
    As cryptographic evidence of endorsement, the first block in a new view can additionally include the coin-QC of the previous view, which is omitted from the protocol description for brevity.
    
\end{itemize}

\begin{figure}[t]
    \centering
    \input{sys-protocol}
    \caption{DiemBFT with Asynchronous Fallback.}
    \vspace{-1em}
    \label{fig:fallback_protocol}
\end{figure}

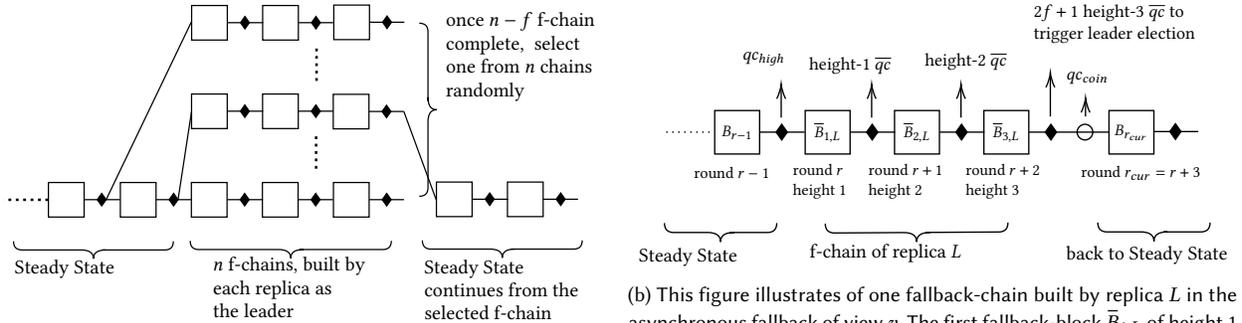
\begin{figure}[t]
     \centering
     \begin{subfigure}[b]{0.49\textwidth}
         \centering
         \resizebox{\textwidth}{!}{
        \tikzset{every picture/.style={line width=0.75pt}} 

\begin{tikzpicture}[x=0.75pt,y=0.75pt,yscale=-1,xscale=1]

\draw   (122.62,163.37) -- (151.28,163.37) -- (151.28,191.13) -- (122.62,191.13) -- cycle ;
\draw   (179.66,163.37) -- (208.32,163.37) -- (208.32,191.13) -- (179.66,191.13) -- cycle ;
\draw   (236.7,163.37) -- (265.36,163.37) -- (265.36,191.13) -- (236.7,191.13) -- cycle ;
\draw  [fill={rgb, 255:red, 0; green, 0; blue, 0 }  ,fill opacity=1 ] (165.4,172.16) -- (168.96,177.15) -- (165.4,182.14) -- (161.83,177.15) -- cycle ;
\draw    (151.49,177.15) -- (180.01,177.15) ;
\draw  [fill={rgb, 255:red, 0; green, 0; blue, 0 }  ,fill opacity=1 ] (222.44,172.16) -- (226,177.15) -- (222.44,182.14) -- (218.87,177.15) -- cycle ;
\draw    (208.53,177.15) -- (237.05,177.15) ;
\draw  [fill={rgb, 255:red, 0; green, 0; blue, 0 }  ,fill opacity=1 ] (279.48,172.16) -- (283.04,177.15) -- (279.48,182.14) -- (275.91,177.15) -- cycle ;
\draw    (265.57,177.15) -- (293.38,177.15) ;
\draw   (64.86,163.37) -- (93.53,163.37) -- (93.53,191.13) -- (64.86,191.13) -- cycle ;
\draw  [fill={rgb, 255:red, 0; green, 0; blue, 0 }  ,fill opacity=1 ] (107.64,172.16) -- (111.21,177.15) -- (107.64,182.14) -- (104.08,177.15) -- cycle ;
\draw    (93.74,177.15) -- (122.26,177.15) ;
\draw   (179.87,21.12) -- (208.53,21.12) -- (208.53,48.88) -- (179.87,48.88) -- cycle ;
\draw   (236.91,21.12) -- (265.57,21.12) -- (265.57,48.88) -- (236.91,48.88) -- cycle ;
\draw    (111.21,177.15) -- (179.59,34.48) ;
\draw  [fill={rgb, 255:red, 0; green, 0; blue, 0 }  ,fill opacity=1 ] (222.65,29.92) -- (226.22,34.91) -- (222.65,39.9) -- (219.09,34.91) -- cycle ;
\draw    (208.75,34.91) -- (237.27,34.91) ;
\draw  [fill={rgb, 255:red, 0; green, 0; blue, 0 }  ,fill opacity=1 ] (279.69,29.92) -- (283.26,34.91) -- (279.69,39.9) -- (276.13,34.91) -- cycle ;
\draw    (265.79,34.91) -- (293.59,34.91) ;
\draw   (179.8,92.26) -- (208.46,92.26) -- (208.46,120.02) -- (179.8,120.02) -- cycle ;
\draw   (236.84,92.26) -- (265.5,92.26) -- (265.5,120.02) -- (236.84,120.02) -- cycle ;
\draw    (179.8,106.31) -- (168.96,177.15) ;
\draw  [fill={rgb, 255:red, 0; green, 0; blue, 0 }  ,fill opacity=1 ] (222.58,101.05) -- (226.14,106.04) -- (222.58,111.03) -- (219.01,106.04) -- cycle ;
\draw    (208.68,106.04) -- (237.2,106.04) ;
\draw  [fill={rgb, 255:red, 0; green, 0; blue, 0 }  ,fill opacity=1 ] (279.62,101.05) -- (283.19,106.04) -- (279.62,111.03) -- (276.06,106.04) -- cycle ;
\draw    (265.72,106.04) -- (293.52,106.04) ;
\draw [line width=1.5]  [dash pattern={on 1.69pt off 2.76pt}]  (280.19,55.69) -- (280.19,81.89) ;
\draw   (293.74,163.37) -- (322.4,163.37) -- (322.4,191.13) -- (293.74,191.13) -- cycle ;
\draw  [fill={rgb, 255:red, 0; green, 0; blue, 0 }  ,fill opacity=1 ] (336.52,172.16) -- (340.08,177.15) -- (336.52,182.14) -- (332.95,177.15) -- cycle ;
\draw    (322.61,177.15) -- (350.42,177.15) ;
\draw   (293.59,92.49) -- (322.26,92.49) -- (322.26,120.25) -- (293.59,120.25) -- cycle ;
\draw  [fill={rgb, 255:red, 0; green, 0; blue, 0 }  ,fill opacity=1 ] (336.38,101.29) -- (339.94,106.28) -- (336.38,111.27) -- (332.81,106.28) -- cycle ;
\draw    (322.47,106.28) -- (350.28,106.28) ;
\draw   (294.02,20.77) -- (322.69,20.77) -- (322.69,48.53) -- (294.02,48.53) -- cycle ;
\draw  [fill={rgb, 255:red, 0; green, 0; blue, 0 }  ,fill opacity=1 ] (336.8,29.56) -- (340.37,34.55) -- (336.8,39.54) -- (333.24,34.55) -- cycle ;
\draw    (322.9,34.55) -- (350.71,34.55) ;
\draw   (178.73,210.24) .. controls (178.73,214.91) and (181.06,217.24) .. (185.73,217.24) -- (249.48,217.24) .. controls (256.15,217.24) and (259.48,219.57) .. (259.48,224.24) .. controls (259.48,219.57) and (262.81,217.24) .. (269.48,217.24)(266.48,217.24) -- (333.23,217.24) .. controls (337.9,217.24) and (340.23,214.91) .. (340.23,210.24) ;
\draw   (362.68,174.41) .. controls (367.35,174.41) and (369.68,172.08) .. (369.68,167.41) -- (369.68,114.89) .. controls (369.68,108.22) and (372.01,104.89) .. (376.68,104.89) .. controls (372.01,104.89) and (369.68,101.56) .. (369.68,94.89)(369.68,97.89) -- (369.68,42.37) .. controls (369.68,37.7) and (367.35,35.37) .. (362.68,35.37) ;
\draw   (36.84,210.59) .. controls (36.84,215.26) and (39.17,217.59) .. (43.84,217.59) -- (91.01,217.59) .. controls (97.68,217.59) and (101.01,219.92) .. (101.01,224.59) .. controls (101.01,219.92) and (104.34,217.59) .. (111.01,217.59)(108.01,217.59) -- (158.18,217.59) .. controls (162.85,217.59) and (165.18,215.26) .. (165.18,210.59) ;
\draw [line width=1.5]  [dash pattern={on 1.69pt off 2.76pt}]  (64.65,177.26) -- (29,177.26) ;
\draw   (376.45,163.01) -- (405.11,163.01) -- (405.11,190.77) -- (376.45,190.77) -- cycle ;
\draw   (433.49,163.01) -- (462.15,163.01) -- (462.15,190.77) -- (433.49,190.77) -- cycle ;
\draw  [fill={rgb, 255:red, 0; green, 0; blue, 0 }  ,fill opacity=1 ] (419.23,171.8) -- (422.79,176.8) -- (419.23,181.79) -- (415.66,176.8) -- cycle ;
\draw    (405.32,176.8) -- (433.84,176.8) ;
\draw  [fill={rgb, 255:red, 0; green, 0; blue, 0 }  ,fill opacity=1 ] (476.27,171.8) -- (479.83,176.8) -- (476.27,181.79) -- (472.7,176.8) -- cycle ;
\draw    (462.36,176.8) -- (490.17,176.8) ;
\draw    (350.28,106.28) -- (376.23,177.15) ;
\draw   (365.18,209.88) .. controls (365.18,214.55) and (367.51,216.88) .. (372.18,216.88) -- (419.35,216.88) .. controls (426.02,216.88) and (429.35,219.21) .. (429.35,223.88) .. controls (429.35,219.21) and (432.68,216.88) .. (439.35,216.88)(436.35,216.88) -- (486.52,216.88) .. controls (491.19,216.88) and (493.52,214.55) .. (493.52,209.88) ;
\draw [line width=1.5]  [dash pattern={on 1.69pt off 2.76pt}]  (279.83,126.28) -- (279.83,152.48) ;

\draw (195.64,221.95) node [anchor=north west][inner sep=0.75pt]  [font=\LARGE] [align=left] {$\displaystyle n$ f-chains, built by\\ each replica as \\ the leader};
\draw (36.45,223.82) node [anchor=north west][inner sep=0.75pt]  [font=\LARGE] [align=left] {Steady State};
\draw (382.65,25.07) node [anchor=north west][inner sep=0.75pt]  [font=\LARGE] [align=left] {once $\displaystyle n-f$ f-chain \\complete, \ select \\one from $\displaystyle n$ chains\\randomly};
\draw (365.25,223.59) node [anchor=north west][inner sep=0.75pt]  [font=\LARGE] [align=left] {Steady State \\continues from the \\selected f-chain};

\end{tikzpicture}
        }
         \caption{This figure illustrates the high-level picture of the protocol. During the fallback, each replica acts as a leader and tries to build a certified fallback-chain. When $2f+1$ f-chains are completed, randomly elects one leader from $n$ replicas, and all replicas continue the Steady State chain from the f-chain built by the elected leader.}
         \label{fig:illustration1}
     \end{subfigure}
     \hfill
     \begin{subfigure}[b]{0.49\textwidth}
         \centering
         \resizebox{\textwidth}{!}{
        \tikzset{every picture/.style={line width=0.75pt}} 

\begin{tikzpicture}[x=0.75pt,y=0.75pt,yscale=-1,xscale=1]

\draw   (169.96,79.51) -- (203.35,79.51) -- (203.35,111.85) -- (169.96,111.85) -- cycle ;
\draw   (236.4,79.51) -- (269.79,79.51) -- (269.79,111.85) -- (236.4,111.85) -- cycle ;
\draw   (302.85,79.51) -- (336.24,79.51) -- (336.24,111.85) -- (302.85,111.85) -- cycle ;
\draw   (395.88,79.51) -- (429.27,79.51) -- (429.27,111.85) -- (395.88,111.85) -- cycle ;
\draw  [fill={rgb, 255:red, 0; green, 0; blue, 0 }  ,fill opacity=1 ] (219.79,89.75) -- (223.95,95.57) -- (219.79,101.38) -- (215.64,95.57) -- cycle ;
\draw    (203.6,95.57) -- (236.82,95.57) ;
\draw  [fill={rgb, 255:red, 0; green, 0; blue, 0 }  ,fill opacity=1 ] (286.24,89.75) -- (290.39,95.57) -- (286.24,101.38) -- (282.09,95.57) -- cycle ;
\draw    (270.04,95.57) -- (303.27,95.57) ;
\draw  [fill={rgb, 255:red, 0; green, 0; blue, 0 }  ,fill opacity=1 ] (352.69,89.75) -- (356.84,95.57) -- (352.69,101.38) -- (348.53,95.57) -- cycle ;
\draw    (336.49,95.57) -- (395.88,95.57) ;
\draw    (352.27,83.94) -- (352.27,53.34) ;
\draw [shift={(352.27,51.34)}, rotate = 450] [color={rgb, 255:red, 0; green, 0; blue, 0 }  ][line width=0.75]    (10.93,-3.29) .. controls (6.95,-1.4) and (3.31,-0.3) .. (0,0) .. controls (3.31,0.3) and (6.95,1.4) .. (10.93,3.29)   ;
\draw    (285.82,84.77) -- (285.82,61.85) ;
\draw [shift={(285.82,59.85)}, rotate = 450] [color={rgb, 255:red, 0; green, 0; blue, 0 }  ][line width=0.75]    (10.93,-3.29) .. controls (6.95,-1.4) and (3.31,-0.3) .. (0,0) .. controls (3.31,0.3) and (6.95,1.4) .. (10.93,3.29)   ;
\draw    (219.38,84.77) -- (219.38,61.85) ;
\draw [shift={(219.38,59.85)}, rotate = 450] [color={rgb, 255:red, 0; green, 0; blue, 0 }  ][line width=0.75]    (10.93,-3.29) .. controls (6.95,-1.4) and (3.31,-0.3) .. (0,0) .. controls (3.31,0.3) and (6.95,1.4) .. (10.93,3.29)   ;
\draw   (102.68,79.51) -- (136.07,79.51) -- (136.07,111.85) -- (102.68,111.85) -- cycle ;
\draw  [fill={rgb, 255:red, 0; green, 0; blue, 0 }  ,fill opacity=1 ] (152.52,89.75) -- (156.67,95.57) -- (152.52,101.38) -- (148.36,95.57) -- cycle ;
\draw    (136.32,95.57) -- (169.54,95.57) ;
\draw    (152.1,84.77) -- (152.1,61.85) ;
\draw [shift={(152.1,59.85)}, rotate = 450] [color={rgb, 255:red, 0; green, 0; blue, 0 }  ][line width=0.75]    (10.93,-3.29) .. controls (6.95,-1.4) and (3.31,-0.3) .. (0,0) .. controls (3.31,0.3) and (6.95,1.4) .. (10.93,3.29)   ;
\draw   (184.91,163.67) .. controls (184.94,168.34) and (187.28,170.66) .. (191.95,170.63) -- (242.85,170.32) .. controls (249.52,170.28) and (252.86,172.59) .. (252.89,177.26) .. controls (252.86,172.59) and (256.18,170.24) .. (262.85,170.2)(259.85,170.22) -- (313.75,169.89) .. controls (318.42,169.86) and (320.74,167.51) .. (320.71,162.84) ;
\draw   (387.57,162.84) .. controls (387.57,167.51) and (389.9,169.84) .. (394.57,169.84) -- (419.31,169.84) .. controls (425.98,169.84) and (429.31,172.17) .. (429.31,176.84) .. controls (429.31,172.17) and (432.64,169.84) .. (439.31,169.84)(436.31,169.84) -- (464.04,169.84) .. controls (468.71,169.84) and (471.04,167.51) .. (471.04,162.84) ;
\draw   (65.31,163.26) .. controls (65.31,167.93) and (67.64,170.26) .. (72.31,170.26) -- (97.04,170.26) .. controls (103.71,170.26) and (107.04,172.59) .. (107.04,177.26) .. controls (107.04,172.59) and (110.37,170.26) .. (117.04,170.26)(114.04,170.26) -- (141.78,170.26) .. controls (146.45,170.26) and (148.78,167.93) .. (148.78,163.26) ;
\draw [line width=0.75]  [dash pattern={on 0.84pt off 2.51pt}]  (66.55,95.57) -- (102.27,95.57) ;
\draw   (372.2,95.57) .. controls (372.2,92.36) and (374.81,89.75) .. (378.02,89.75) .. controls (381.23,89.75) and (383.83,92.36) .. (383.83,95.57) .. controls (383.83,98.78) and (381.23,101.38) .. (378.02,101.38) .. controls (374.81,101.38) and (372.2,98.78) .. (372.2,95.57) -- cycle ;
\draw    (378.43,83.94) -- (378.43,72.03) ;
\draw [shift={(378.43,70.03)}, rotate = 450] [color={rgb, 255:red, 0; green, 0; blue, 0 }  ][line width=0.75]    (10.93,-3.29) .. controls (6.95,-1.4) and (3.31,-0.3) .. (0,0) .. controls (3.31,0.3) and (6.95,1.4) .. (10.93,3.29)   ;
\draw  [fill={rgb, 255:red, 0; green, 0; blue, 0 }  ,fill opacity=1 ] (445.3,89.34) -- (449.45,95.15) -- (445.3,100.96) -- (441.14,95.15) -- cycle ;
\draw    (429.1,95.15) -- (462.32,95.15) ;

\draw (186.65,95.68) node    {$\overline{B}_{1,L}$};
\draw (412.57,95.68) node    {$B_{r_{cur}}$};
\draw (171.54,37.9) node [anchor=north west][inner sep=0.75pt]  [font=\large] [align=left] {height-$\displaystyle 1$ $\displaystyle \overline{qc}$};
\draw (159.4,117.93) node [anchor=north west][inner sep=0.75pt]   [align=left] {round $\displaystyle r$\\height $\displaystyle 1$};
\draw (119.38,95.68) node    {$B_{r-1}$};
\draw (122.11,34.41) node [anchor=north west][inner sep=0.75pt]  [font=\large] [align=left] {$\displaystyle qc_{high}$};
\draw (85.95,119.79) node [anchor=north west][inner sep=0.75pt]   [align=left] {round $\displaystyle r-1$};
\draw (373.74,119.37) node [anchor=north west][inner sep=0.75pt]   [align=left] {round $\displaystyle r_{cur} =r+3$};
\draw (172,177.73) node [anchor=north west][inner sep=0.75pt]  [font=\Large] [align=left] {f-chain of replica $\displaystyle L$};
\draw (362.73,178.51) node [anchor=north west][inner sep=0.75pt]  [font=\Large] [align=left] {back to Steady State};
\draw (338.4,-1.22) node [anchor=north west][inner sep=0.75pt]  [font=\large] [align=left] {$\displaystyle 2f+1$ height-3 $\displaystyle \overline{qc}$ to \\trigger leader election};
\draw (215.68,117.93) node [anchor=north west][inner sep=0.75pt]   [align=left] {round $\displaystyle r+1$\\height $\displaystyle 2$};
\draw (287.94,117.93) node [anchor=north west][inner sep=0.75pt]   [align=left] {round $\displaystyle r+2$\\height $\displaystyle 3$};
\draw (253.1,95.68) node    {$\overline{B}_{2,L}$};
\draw (319.55,95.68) node    {$\overline{B}_{3,L}$};
\draw (257.48,34.9) node [anchor=north west][inner sep=0.75pt]  [font=\large] [align=left] {height-$\displaystyle 2$ $\displaystyle \overline{qc}$};
\draw (44.99,180.09) node [anchor=north west][inner sep=0.75pt]  [font=\Large] [align=left] {Steady State};
\draw (363.01,49.36) node [anchor=north west][inner sep=0.75pt]  [font=\large] [align=left] {$\displaystyle qc_{coin}$};

\end{tikzpicture}
        }
         \caption{This figure illustrates of one fallback-chain built by replica $L$ in the asynchronous fallback of view $v$. The first fallback-block $\fblock_{1,L}$ of height $1$ contains the $\hqc$, and the second and third f-block contains the f-QC of the previous f-block. $2f+1$ height-$3$ f-QCs will trigger the leader election to form a coin-QC $\coinqc$ that randomly selects a replica as the leader with probability $1/n$. Replicas continue the Steady State from the f-chain by the leader.}
         \label{fig:illustration2}
     \end{subfigure}
        \caption{Illustrations of the Asynchronous Fallback Protocol.}
        \vspace{-1em}
        \label{fig:illustrations}
\end{figure}

\paragraph{Description of the Asynchronous Fallback Protocol.}
The illustration of the fallback protocol can be found in Figure~\ref{fig:illustrations}.
Now we give a description of the Asynchronous Fallback protocol (Figure~\ref{fig:fallback_protocol}), which replaces the Pacemaker protocol in the DiemBFT protocol (Figure~\ref{fig:diembft}).
Each replica additionally keeps a boolean value $\mode$ to record if it is in a fallback, during which the replica will not vote for any regular block. 
When the timer for round $r$ expires, the replica enters the fallback and multicast a timeout message containing its highest QC $\hqc$ and a threshold signature share of the current view number.
When receiving $2f+1$ timeout messages of the same view number no less than the current view, the replica forms a fallback-TC $\ftc$ and enters the fallback.
The replica also updates its current view number $v_{cur}$, and initializes the voted round number $\fvotedround[j]=0$ and the voted height number $\fvotedheight[j]=0$ for each replica $j$. 
Finally, the replica starts to build the fallback-chain by multicasting the f-TC $\ftc$ and proposing the first fallback-block of height $1$, round number $\hqc.r+1$, view number $v_{cur}$, and has the block certified by $\hqc$ as the parent.
When receiving such a height-$1$ f-block proposed by any replica $j$, replicas that are in the fallback and not yet voted for height-$1$ f-block by $j$, will verify the validity of the f-block, update the voted round and height number for the fallback, and vote for the f-block by sending the threshold signature share back to the replica $j$.
When the replica receives $2f+1$ votes that form a fallback-QC for its height-$1$ f-block, the replica can propose the next f-block in its f-chain, which has height and round number increased by $1$, and has the height-$1$ f-block as the parent.
When receiving such a height-$2$ f-block proposed by any replica $j$, replicas that are in the fallback and not yet voted for any height-$2$ f-block by $j$, will similarly update the voted round and height number and vote for the f-block, after verifying the validity of the f-block.
Similarly, when the height-$2$ f-block gets certified, the proposer can propose the height-$3$ f-block extending the height-$2$ f-block.
When the height-$3$ f-block gets certified by an f-QC, the proposer multicasts the height-$3$ f-QC.
When the replica receives $2f+1$ height-$3$ f-QCs of the current view, it knows that $2f+1$ f-chains have been completed.
The replica then will start the leader election by signing and multicasting a coin share for the current view number, and a unique leader of the view can be elected by any of the $f+1$ valid coin share messages.
When receiving such $f+1$ coin share messages, a coin-QC $\coinqc$ can be formed.
The fallback is finished once a $\coinqc$ is received, and the replica updates $\mode=false$ to exit the fallback and enters the next view. To ensure consistency, the replica also updates the highest voted round number $r_{vote}$ to be the voted round number for the leader during the fallback, and updates its highest QC $\hqc$, highest locked rank $\rank_{lock}$ and check for commit as in the {\em Lock} step.
Recall that for any certified f-block proposed by the leader $L$ elected by the coin-QC of view $v$ and the f-QC are endorsed, and will be handled as a QC for the regular block. QCs, f-QCs, blocks and f-blocks are ranked first by the view numbers and then the round numbers, except that endorsed f-blocks (f-QCs) rank higher than certified blocks (QCs) if they have the same view number.
As cryptographic evidence of endorsement, the first block in a new view can additionally include the coin-QC of the previous view, which is omitted from the protocol description for brevity.


\paragraph{Optimization in Practice.}
Instead of always having all replica build their own fallback-chain and wait for $2f+1$ of them to finish, in practice we can optimize the performance of the protocol by allowing the replicas to build their fallback-chain on top of others' fallback-chain.
For instance, once replica $i$ sees the first certified height-$1$ fallback-block by any replica $j$, replica $i$ can propose its height-$2$ f-block extending that height-$1$ f-block instead of waiting for its height-$1$ f-block to be certified.
Any replica $i$ can even entirely adopt other replicas' completed f-chain, once replica $i$ receives all certified f-blocks of that f-chain.
Such the above optimization, the fallback protocol can proceed in the speed of the fastest replica instead of the fastest $2f+1$ replicas, also possibly reduce the communication cost.

\subsection{Proof of Correctness}
\label{sec:proof}

Recall that we say a fallback-block $\fblock$ of view $v$ by replica $i$ is endorsed, if $\fblock$ is certified and there exists a coin-QC of view $v$ that elects replica $i$ as the leader.

We say a replica is in the Steady State of view $v$, if it has $\mode=false$ and $v_{cur}=v$; otherwise if $\mode=true$ and $v_{cur}=v$, we say the replica is in the asynchronous fallback of view $v$.

We use $\fc(B)$ to denote the set of replicas who provide votes for the QC or f-QC for $B$.

If a block or f-block $B$ is an ancestor of another block or f-block $B'$ due to a chain of QCs or endorsed f-QCs, we say $B'$ extends $B$. A block or f-block also extends itself.

\textbf{Rules for Leader Rotation.} Let $\mathcal{L}=\{L_1,L_2,...\}$ be the infinite sequence of predefined leaders corresponding to the infinite round numbers starting from $1$. To ensure the liveness of synchronous fast path, we rotate the leader once every $4$ rounds, i.e., $L_{4k+1},...,L_{4k+4}$ are the same replica for every $k\geq 0$.

\begin{lemma}\label{lem:1}
    Let $B,B'$ both be endorsed f-blocks of the same view, or both be certified blocks of the same view.
    If the round number of $B$ equals the round number of $B'$, then $B=B'$.
\end{lemma}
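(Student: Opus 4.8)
My plan is a quorum-intersection argument, split along the two disjuncts of the hypothesis. In both cases the certificate for $B$ (resp.\ $B'$) is a threshold signature assembled from the shares of a quorum $\fc(B)$ (resp.\ $\fc(B')$) of $2f+1$ replicas; since $n=3f+1$, $|\fc(B)\cap\fc(B')|\ge f+1$, so some \emph{honest} replica $i$ contributed a valid vote share to both certificates. I would then show that $i$'s two shares are shares of the \emph{same} signed message; since that message contains the block's $id=\hash(qc,r,v,txn)$ and $\hash$ is collision-resistant, equal $id$'s force $B=B'$ (for f-blocks the message additionally pins down $height$ and $proposer$, so the same conclusion follows).

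\textbf{Case 1 (certified regular blocks of view $v$, round $r$).} Here $i$'s share is $\thsig{id,r,v}_i$, emitted in the \emph{Vote} step only while $r=r_{cur}$, $v=v_{cur}$ and $r>r_{vote}$, after which $r_{vote}$ is set to $r$. I would observe that $v_{cur}$ is non-decreasing over time and that the only step decreasing $r_{vote}$ is \emph{Exit Fallback}, which strictly increases $v_{cur}$. Hence between two votes of $i$ both passing the check $v=v_{cur}$, the view $v_{cur}$ is constant, no \emph{Exit Fallback} occurs, and $r_{vote}$ only increases; so once $i$ has voted in round $r$ while $v_{cur}=v$ it has $r_{vote}\ge r$ and cannot vote again in round $r$ at that view. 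Thus $i$'s two shares agree and $B=B'$.

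\textbf{Case 2 (endorsed f-blocks of view $v$, round $r$).} Endorsement of $B$ (resp.\ $B'$) supplies a coin-QC of view $v$ electing $B.proposer$ (resp.\ $B'.proposer$); since the leader-election primitive yields a \emph{unique} leader $L$ for view $v$, we get $B.proposer=B'.proposer=L$. Now $i$'s share is $\thsig{id,r,v,h,L}_i$, emitted in \emph{Fallback Vote} only while $\mode=true$, $v=v_{cur}$, and --- for $h\ge 2$ --- $r>\fvotedround[L]$ (for $h=1$ it is $i$'s first vote for $L$, when $\fvotedround[L]=0<r$), after which $\fvotedround[L]$ is set to $r$. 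Since $\fvotedround[L]$ is reset only at \emph{Enter Fallback} and, within a fixed view, $i$'s view number stays put while no such reset occurs, $i$'s successive fallback votes for f-blocks of $L$ in view $v$ carry strictly increasing round numbers. Hence $i$ votes for at most one round-$r$ view-$v$ f-block of $L$, and again $B=B'$.

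\textbf{Expected main obstacle.} The one non-routine ingredient, shared by both cases, is the invariant that an honest replica casts at most one vote per round within a fixed view. Establishing it is essentially a monotonicity bookkeeping exercise: one must verify that every reset of $r_{vote}$ (resp.\ of $\fvotedround[\cdot]$ and $\fvotedheight[\cdot]$) is accompanied by a strict increment of $v_{cur}$ (resp.\ coincides with that view's fallback being entered, and is unavailable once one is already in it), so that no ``replay'' of a round can happen at the same view. Once that invariant is pinned down, both cases fall out of quorum intersection together with collision resistance of $\hash$; the uniqueness of the elected view leader, needed only in Case 2, is taken verbatim from the leader-election assumption.
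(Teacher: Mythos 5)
Your proof is correct and follows essentially the same route as the paper's: quorum intersection of the two $2f{+}1$-sized vote sets yields an honest replica that voted for both blocks, and the voting rules ($r_{vote}$ for regular blocks, $\fvotedround[L]$ for f-blocks of the unique view-$v$ leader) forbid that replica from voting twice at the same round within a view. You are in fact somewhat more careful than the paper, which silently skips the points you flag (that the \emph{Exit Fallback} reset of $r_{vote}$ is tied to a strict increment of $v_{cur}$, and that endorsement forces both f-blocks to share the same elected proposer).
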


\begin{proof}
    Suppose on the contrary that $B\neq B'$. Let $r$ be the round number of $B,B'$.
    
    Suppose that $B,B'$ both are certified blocks of the same view.
    Since $n=3f+1$ and $|\fc(B)|=|\fc(B')|=2f+1$, by quorum intersection, $\fc(B)\cap \fc(B')$ contains at least one honest replica $h$ who voted for both $B$ and $B'$.
    Without loss of generality, suppose that $h$ voted for $B$ first.
    According to the {\bf Vote} step, $h$ updates its $r_{vote}=r$ and will only vote for blocks with round number $>r_{vote}$ in the same view. 
    Therefore $h$ will not vote for $B'$, thus $B'$ cannot be certified, contradiction.
    
    Suppose that $B,B'$ both are endorsed f-blocks of the same view.
    By quorum intersection, at least one honest replica $h$ voted for both $B$ and $B'$. Without loss of generality, suppose that $h$ voted for $B$ first.
    Since $B,B'$ are of the same view with elected leader $L$, 
    according to the {\bf Fallback Vote} step, after voting for $B$ of round $r$, $h$ updates $\fvotedround[L]= r$ and will only vote for blocks of round number $>\fvotedround[L]$.
    Thus $h$ will not vote for $B'$ and $B'$ cannot be certified, contradiction.

\end{proof}

\begin{lemma}\label{lem:2}
    For any chain that consists of only certified blocks and endorsed f-blocks, the adjacent blocks in the chain have consecutive round numbers, and nondecreasing view numbers. Moreover, for blocks of the same view number, no endorsed f-block can be the parent of any certified regular block.
\end{lemma}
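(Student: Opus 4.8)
It suffices to fix an arbitrary parent--child pair $(B,B')$ in such a chain, with $B$ the parent of $B'$, and prove the three assertions for that pair; the statement then follows by walking along the chain. Being the parent means $B'$ carries a certificate $qc$ that certifies $B$ -- either a regular QC or an endorsed f-QC -- and in either case $qc.r=B.r$ and $qc.v=B.v$. So I must show: (i) $B'.r=B.r+1$; (ii) $B'.v\ge B.v$; and (iii) if $B'.v=B.v$, then it is not the case that $B$ is an endorsed f-block while $B'$ is a certified regular block.

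For (i) I would do a short case split on $B'$. If $B'$ is a certified regular block, some honest replica accepted it in the {\bf Vote} step, which (with the change listed in Figure~\ref{fig:fallback_protocol}) enforces $r=qc.r+1$, i.e.\ $B'.r=B.r+1$. If $B'$ is an endorsed f-block of height $1$, an honest voter in {\bf Fallback Vote} enforced $r=qc.r+1$ against $B'$'s $\hqc$-pointer; if $B'$ is an endorsed f-block of height $2$ or $3$, an honest voter enforced $r=\fqc.r+1$ against $B'$'s f-QC pointer. In all cases $B'.r=B.r+1$. (Consistently, {\bf Propose}, {\bf Enter Fallback} and {\bf Fallback Propose} each build a block whose round is the parent's round plus one.)

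For (ii) the tool is a quorum-intersection plus temporal-ordering argument. Both $qc$ (which certifies $B$) and $B'$'s own certificate are formed from $2f+1$ votes, hence each from at least $f+1$ honest replicas; since there are only $2f+1$ honest replicas, some honest $h$ voted both for $B$ and for $B'$. Because $qc$ is embedded inside $B'$, it existed before $B'$ was created, hence before $h$ received and voted for $B'$; and $h$'s vote for $B$ was cast no later than $qc$ was assembled. Thus $h$ voted for $B$ strictly before voting for $B'$. Each voting rule used here ({\bf Vote}, and both branches of {\bf Fallback Vote}) checks $v=v_{cur}$, so $h$ had $v_{cur}=B.v$ at the first vote and $v_{cur}=B'.v$ at the second; since every replica's $v_{cur}$ is monotone non-decreasing ({\bf Enter Fallback} only raises it to $\ftc.v\ge v_{cur}$, {\bf Exit Fallback} increments it), we conclude $B.v\le B'.v$.

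For (iii), suppose for contradiction $B$ is an endorsed f-block of view $v$ and $B'$ a certified regular block with $B'.v=v$. Then $B'.qc$ is (syntactically) the f-QC of $B$, and an f-QC is ``handled as a QC'' -- in particular passes a replica's validity and voting checks for the regular block $B'$ -- only once that replica holds the coin-QC of view $v$ endorsing it. But any replica that obtains a coin-QC of its current view $v_{cur}$ immediately runs {\bf Exit Fallback}, setting $v_{cur}\gets v_{cur}+1$; hence no honest replica is ever in a state where it simultaneously holds a coin-QC of view $v$ and has $v_{cur}=v$. So no honest replica can legitimately vote for $B'$ (its {\bf Vote} check requires $v=v_{cur}$ while the endorsement of $B'.qc$ requires the coin-QC of view $v$), hence $B'$ cannot gather $2f+1$ votes, contradicting that $B'$ is certified. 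The main obstacle is precisely part (iii) together with this bookkeeping: one must argue carefully that validating a regular block whose parent pointer is an endorsed f-QC genuinely forces possession of the corresponding coin-QC, and that such possession is incompatible with remaining in the same view; in part (ii) the care goes into pinning down the temporal order (via the embedding of $qc$ in $B'$) so that monotonicity of $v_{cur}$ can be applied.
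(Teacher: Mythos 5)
Your parts (i) and (ii) follow the paper's own proof: a case split on the voting rules for the round numbers, and quorum intersection plus monotonicity of $v_{cur}$ for the view numbers (your explicit temporal-ordering step via the embedding of $qc$ in $B'$ is a welcome tightening of a point the paper leaves implicit). Part (iii) is where you genuinely diverge. The paper again takes the intersecting honest replica $h\in\fc(B)\cap\fc(B')$ and observes that $h$ must have $\mode=true$ when voting for the f-block $B$ (by \textbf{Fallback Vote}) and $\mode=false$ when voting for the regular block $B'$ (by the modified \textbf{Vote}); since the only $true\to false$ transition is \textbf{Exit Fallback}, which sets $v_{cur}\gets v_{cur}+1$, $h$ can never again satisfy the check $v=v_{cur}$ for a view-$v$ regular block. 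Your argument instead shows that \emph{no} honest replica can vote for $B'$, because validating $B'$'s parent pointer requires possessing the $\coinqc$ of view $v$ while still having $v_{cur}=v$, which \textbf{Exit Fallback} forbids. This is a stronger conclusion, but it leans on the endorsement-evidence mechanism that the paper deliberately leaves underspecified (``omitted from the protocol description for brevity'') and on the atomicity of the \textbf{Exit Fallback} handler; the paper's route needs neither, since it only inspects the local state of one intersecting voter using machinery you had already set up for part (ii). Both arguments are sound under a reasonable reading of the protocol, but the $\mode$-flag argument is the more robust of the two and worth knowing.
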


\begin{proof}
    Suppose on the contrary that there exist adjacent blocks $B,B'$ of round number $r,r'$ where $B$ is the parent block of $B'$, and $r'\neq r+1$.
    If $B'$ is a certified block, according to the {\bf Vote} step, honest replicas will not vote for $B'$ since $r'\neq r+1$, and $B'$ cannot be certified, contradiction.
    If $B$ is an endorsed f-block, according to the {\bf Fallback Vote} step, honest replicas will not vote for $B'$ since $r'\neq r+1$, and $B'$ cannot be certified, contradiction.
    Therefore, the blocks in the chain have consecutive round numbers.
    
    Suppose on the contrary that there exist adjacent blocks $B,B'$ of view number $v,v'$ where $B$ is the parent block of $B'$, and $v'<v$.
    Since $n=3f+1$ and $|\fc(B)|=|\fc(B')|=2f+1$, by quorum intersection, $\fc(B)\cap \fc(B')$ contains at least one honest replica $h$ who voted for both $B$ and $B'$.
    According to the {\bf Vote} and {\bf Fallback Vote} steps, $h$ has its $v_{cur}=v$ when voting for $B$.
    According to the protocol, $h$ only updates its $v_{cur}$ in steps {\bf Enter Fallback} and {\bf Exit Fallback}, and $v_{cur}$ is nondecreasing.
    Hence, $h$ will not vote for $B'$ since $v'<v_{cur}$, contradiction.
    Therefore, the blocks in the chain have nondecreasing view numbers.
    
    Suppose on the contrary that there exist adjacent blocks $B,B'$ of the same view number $v$ where $B$ is the parent block of $B'$, and $B$ is an endorsed f-block and $B'$ is a certified block.
    Since $n=3f+1$ and $|\fc(B)|=|\fc(B')|=2f+1$, by quorum intersection, $\fc(B)\cap \fc(B')$ contains at least one honest replica $h$ who voted for both $B$ and $B'$.
    According to the {\bf Fallback Vote} step, $h$ has $\mode=true$ when voting for $B$. 
    According to the {\bf Vote} step, $h$ has $\mode=false$ when voting for $B'$.
    The only step for $h$ to set its $\mode$ to $false$ is the {\bf Exit Fallback} step, where $h$ also updates its current view to $v+1$. Then, $h$ will not vote for the view-$v$ block $B'$, contradiction.
    Therefore, for blocks of the same view number, no fallback-block can be the parent of any regular block.
\end{proof}

\begin{lemma}\label{lem:3}
    Let $B,B'$ both be endorsed f-blocks of the same view, then either $B$ extends $B'$ or $B'$ extends $B$.
\end{lemma}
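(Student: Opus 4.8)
My plan is to show that all endorsed f-blocks of view $v$ lie on the single fallback-chain built by the unique replica $L$ whom the view-$v$ coin-QC elects as leader, and that they are totally ordered there by height; comparability of $B$ and $B'$ then follows because ``extends'' is transitive and every block extends itself. First, since $B$ and $B'$ are endorsed, a coin-QC of view $v$ exists, and the leader-election primitive makes the elected leader a deterministic function of the view, so $B$ and $B'$ are both \emph{certified f-blocks of view $v$ proposed by the same replica $L$}.

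The crux is to show that for each height $h\in\{1,2,3\}$ there is at most one certified height-$h$ f-block of view $v$ proposed by $L$; call it $E_h$ when it exists. When an honest replica enters the view-$v$ fallback it resets $\fvotedheight[L]\gets 0$ ({\bf Enter Fallback}), and by {\bf Fallback Vote} it votes for a height-$h$ f-block of $L$ only while $\mode=true$, $v_{cur}=v$, and $h>\fvotedheight[L]$, immediately setting $\fvotedheight[L]\gets h$. Since $v_{cur}$ is nondecreasing, an honest replica's participation in the view-$v$ fallback is a single interval throughout which, after that one reset, $\fvotedheight[L]$ only increases; hence it casts at most one vote for a height-$h$ f-block of view $v$ by $L$. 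As $|\fc(\cdot)|=2f+1$ and $n=3f+1$, two distinct certified such f-blocks would share an honest voter --- impossible. (Once two candidates are known to carry the same round number, Lemma~\ref{lem:1} gives this directly.)

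Next I would chase parent pointers. For $h\ge 2$, the validity checks in {\bf Fallback Vote} force a certified height-$h$ f-block of view $v$ by $L$ to contain, in its $qc$ field, an f-QC for a height-$(h{-}1)$ f-block of view $v$ by $L$; that f-block is thereby certified, hence equals $E_{h-1}$ by the previous paragraph, and the f-QC is endorsed (it is an f-QC of $L$ and a coin-QC of view $v$ elects $L$). Thus $E_{h-1}$ exists and $E_h$ extends $E_{h-1}$, so by transitivity $E_{h'}$ extends $E_h$ whenever $E_{h'}$ exists and $h\le h'$. Finally $B=E_{B.height}$ and $B'=E_{B'.height}$, and the one of larger height extends the one of smaller (equal heights force $B=B'$), which is exactly the statement.

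The step I expect to need the most care is the claim that the parent certificate of a height-$h$ f-block of view $v$ by $L$ certifies a height-$(h{-}1)$ f-block \emph{of the same view and the same proposer $L$}. For an honest $L$ this holds by construction ({\bf Fallback Propose}), but for a Byzantine $L$ one must read the f-block/f-QC format --- each is tagged with a view number and a proposer --- together with the numeric checks $v=v_{cur}$, $r=\fqc.r+1$, and $h=\fqc.height+1$ in {\bf Fallback Vote} as also pinning the parent f-QC to $L$ and view $v$; otherwise a Byzantine leader could fork its own f-chain and break the lemma. Everything else reduces to routine quorum intersection and the monotonicity of the voted-height bookkeeping.
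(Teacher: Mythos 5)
Your proof is correct and takes essentially the same route as the paper's: both rest on the elected leader being unique for the view, quorum intersection, and the strictly increasing $\fvotedheight[L]$ bookkeeping forcing at most one certified f-block of $L$ per height, with the parent-pointer structure then totally ordering the chain --- you argue directly via uniqueness-per-height while the paper argues by contradiction, but the substance is identical. Your closing caveat, that the validity checks in \textbf{Fallback Vote} must pin the parent f-QC to the same proposer and view to rule out a Byzantine leader forking its own f-chain, is a point the paper's proof also relies on silently, so flagging it is a strength rather than a gap.
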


\begin{proof}
    Suppose on the contrary that $B,B'$ do not extend one another, and they have height numbers $h,h'$ respectively. Let $L$ be the replica that proposes $B,B'$ and elected by the coin-QC.
    Without loss of generality, assume that $h\leq h'$.
    According to the {\bf Fallback Vote} step, each honest replica votes for f-blocks of $L$ with strictly increasing height numbers. 
    Since by quorum intersection, there exists at least one honest replica $h$ that voted for both $B$ and $B'$, thus we have $h<h'$.
    Since $h'>h$, and according to the {\bf Fallback Vote} step, there must be another endorsed f-block $B''$ by $L$ of height $h$ that $B'$ extends, and $B,B''$ do not extend one another. Similarly, by quorum intersection, at least one honest replica voted for both $B$ and $B''$, which is impossible since each honest replica votes for f-blocks of $L$ with strictly increasing height numbers. 
    Therefore, $B,B'$ extend one another.
\end{proof}

\begin{lemma}\label{lem:4}
    If there exist three adjacent certified or endorsed blocks $B_r,B_{r+1},B_{r+2}$ in the chain with consecutive round numbers $r,r+1,r+2$ and the same view number $v$, then any certified or endorsed block of view number $v$ that ranks no lower than $B_r$ must extend $B_r$.
\end{lemma}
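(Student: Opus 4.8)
The plan is to argue by strong induction on rank, through a minimal counterexample. Suppose toward a contradiction that some certified-or-endorsed block of view $v$ with rank at least $\rank(B_r)$ does not extend $B_r$, and let $B^*$ be one of smallest rank; since $B_r$ extends itself, $B^*\neq B_r$. By Lemma~\ref{lem:1}, together with the convention that an endorsed f-block outranks a regular block of the same view, $B^*$ cannot have round $r$ except in the degenerate situation where $B^*$ is an endorsed f-block and $B_r$ a certified regular block; otherwise $B^*$ has round $r^*>r$, which is the inductive step.

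The engine of the inductive step is quorum intersection at the top of the $3$-chain: since $|\fc(B_{r+2})|=|\fc(B^*)|=2f+1$ and $n=3f+1$, some honest replica $h$ voted for both $B_{r+2}$ and $B^*$. Processing $B_{r+2}$ triggers the {\em Lock} step on the QC (or endorsed f-QC, handled identically) carried inside $B_{r+1}$, which certifies $B_r$, so $h$ holds $\rank_{lock}\ge\rank(B_r)$ by the time it votes for $B_{r+2}$, and $\rank_{lock}$ never decreases. I then split on the order of $h$'s two votes. If $h$ voted for $B^*$ \emph{before} $B_{r+2}$: voting for $B^*$ pushes $h$'s voted-round register to at least $r^*$ (directly $r_{vote}$, or $\fvotedround[L]$, which is copied into $r_{vote}$ at {\em Exit Fallback}), and between the two votes $h$ cannot have exited the view-$v$ fallback, else it would be past view $v$ and unable to vote for the view-$v$ block $B_{r+2}$; hence this register does not drop, the round and view of $B_{r+2}$ force $r^*\le r+2$, and Lemma~\ref{lem:1} gives $B^*\in\{B_{r+1},B_{r+2}\}$, contradicting that $B^*$ does not extend $B_r$.

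If instead $h$ voted for $B^*$ \emph{after} $B_{r+2}$ (so with $\rank_{lock}\ge\rank(B_r)$), I trace the lock check through $B^*$'s ancestry to a certified ancestor $\hat P$ of $B^*$ with $\rank(\hat P)\ge\rank(B_r)$: directly when $B^*$ is a regular block or a height-$1$ f-block, since the {\em Vote}/{\em Fallback Vote} rule then forces $B^*.qc.\rank\ge\rank_{lock}$ and $\hat P$ may be taken to be $B^*$'s parent; otherwise via the height-$1$ f-block of $B^*$'s leader, which $B^*$ extends by Lemma~\ref{lem:3} and for which a further quorum intersection with $\fc(B_{r+2})$ (plus an ordering argument as above) yields an honest voter whose {\em Fallback Vote} check forces that leader's $\hqc$, certifying the desired $\hat P$, to have rank at least $\rank(B_r)$. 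By Lemma~\ref{lem:2}, $\hat P$ lies on $B^*$'s chain with view at most $v$ and strictly smaller round; since ranks compare by view first, $\rank(\hat P)\ge\rank(B_r)$ pins $\hat P$'s view to $v$, so $\rank(B_r)\le\rank(\hat P)<\rank(B^*)$, the induction hypothesis gives that $\hat P$ extends $B_r$, and since $B^*$ extends $\hat P$, so does $B^*$ --- contradiction. The degenerate same-round case is the same argument in miniature: an endorsed height-$1$ f-block of view $v$ at round $r$ carries an $\hqc$ of round $r-1$, of rank below $\rank(B_r)$, which contradicts the lock check of an honest voter that also lies in $\fc(B_{r+2})$.

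I expect the main obstacle to be making the fallback-block cases fit this template uniformly, rather than any single step being deep. The delicate points are: (i) {\em Fallback Vote} performs neither a {\em Lock} update nor an $\hqc$-against-$\rank_{lock}$ check except at height $1$, so for an endorsed f-block the lock information must be recovered from the ``endorsed f-QC behaves like a QC'' convention or by descending the fallback-chain to its height-$1$ block; (ii) Lemma~\ref{lem:2} constrains the $3$-chain $B_r,B_{r+1},B_{r+2}$ (parent to child) to the shapes R-R-R, R-R-F, R-F-F, F-F-F, and this --- with the facts that consecutive f-blocks of one leader have consecutive heights and rounds and that the endorsed f-blocks of a view are totally ordered by extension (Lemma~\ref{lem:3}) --- is needed to locate the height-$1$ f-block and hence where $\hqc$ enters the chain; and (iii) the {\em Exit Fallback} bookkeeping, where $r_{vote}$ is overwritten by $\fvotedround[L]$ and $v_{cur}$ is incremented, must be tracked carefully to justify the monotonicity used in the ``voted for $B^*$ first'' branch and in the degenerate case.
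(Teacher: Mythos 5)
Your skeleton matches the paper's (minimal-rank counterexample, quorum intersection with $\fc(B_{r+2})$, the lock installed by the $2$-chain $B_r\leftarrow B_{r+1}$, descent to the height-$1$ f-block for endorsed counterexamples, Lemma~\ref{lem:3} for two endorsed f-blocks of one view), and you are more explicit than the paper about the vote-ordering split and about the induction. But there is a genuine gap in the case the paper treats separately: $B_r$ is a certified regular block, $B_{r+2}$ is an endorsed f-block (your shapes R-R-F and R-F-F), and the minimal counterexample $B^*$ is a certified \emph{regular} block of view $v$. Here your ``voted after'' branch is unavailable: within view $v$ every regular vote precedes every fallback vote, since voting for an f-block requires $\mode=true$ and the only way back to $\mode=false$ increments $v_{cur}$; so $h$ necessarily voted for $B^*$ before $B_{r+2}$, and need not have held $\rank_{lock}\geq \rank(B_r)$ at that moment (the $2$-chain over $B_r$ may first become visible only inside the height-$1$ f-block, i.e., during the fallback). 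And your ``voted before'' branch does not bound $r^*$: voting for $B^*$ pushes $r_{vote}$, but the {\em Fallback Vote} rule for $B_{r+2}$ consults $\fvotedheight$, $\fvotedround$ and the QC-versus-$\rank_{lock}$ check, never $r_{vote}$, so nothing forces $r^*\leq r+2$; and even granting such a bound, Lemma~\ref{lem:1} compares only blocks of the same type, so it could not identify a regular $B^*$ with the f-block at that round of the $3$-chain.

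The paper closes this case by a device your plan lacks. Since $B^*$ does not extend $B_r$ and chains have consecutive rounds (Lemma~\ref{lem:2}), $B^*$ has a round-$r$ ancestor $B_r'\neq B_r$; by Lemma~\ref{lem:2} (views nondecreasing, and no endorsed f-block is the parent of a same-view regular block) a view-$v$ candidate for $B_r'$ would have to be a certified regular block and hence equal $B_r$ by Lemma~\ref{lem:1}, so $B_r'$ has view $<v$; then quorum intersection between $\fc(B_r')$ and $\fc(B_r)$ yields an honest replica that voted for two certified round-$r$ blocks in increasing view order, contradicting the monotonicity of $r_{vote}$ over that interval. You need this ``two certified blocks at the same round in different views'' argument (or an equivalent) to complete the proof; the remainder of your plan --- the R-R-R case, the endorsed-counterexample cases via the height-$1$ lock check, and the degenerate same-round case --- is sound and in line with the paper's.
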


\begin{proof}
    Suppose on the contrary that there exists a block of view number $v$ that ranks no lower than $B_r$ and does not extend $B_r$. Let $B$ be such a block with the smallest rank, then the parent block $B'$ of $B$ also does not extend $B_r$, but ranks no higher than $B_r$.
    By Lemma~\ref{lem:2}, there are $3$ cases: (1) $B_r,B_{r+2}$ are certified blocks, (2) $B_r$ is a certified block and $B_{r+2}$ is an endorsed blocks, and (3) $B_r,B_{r+2}$ are endorsed blocks.
    
    Suppose $B_r,B_{r+2}$ are certified blocks.
    By quorum intersection, $\fc(B)\cap \fc(B_{r+2})$ contains at least one honest replica $h$ who voted for both $B$ and $B_{r+2}$.
    According to the {\bf Vote} and {\bf Lock} step, when $h$ votes for $B_{r+2}$, it sets its $\rank_{lock}$ to be the rank of $B_r$.
    Suppose that $B$ is a certified block, by Lemma~\ref{lem:1}, $B$ must has round number $\geq r+3$ otherwise $B$ will extend $B_r$.
    If the parent block $B'$ of $B$ has view number $<v$, then $B'$ ranks lower than $B_r$. 
    If $B'$ has view number $v$, by Lemma~\ref{lem:2} and \ref{lem:1}, $B'$ is a certified block and has round number $<r$, thus also ranks lower than $B_r$.
    According to the {\bf Vote} step, $h$ will not vote for $B$ since $B'$ has rank lower than its $\rank_{lock}$ in round $\geq r+3$, contradiction.
    Suppose that $B$ is a endorsed block, let $B_1$ be the height-$1$ endorsed block of view $v$, and let $B_1'$ be the parent block of $B_1$, which ranks no higher than $B_r$.
    If $B_1'$ is a certified block, by the same argument above we have $B_1'$ ranks lower than $B_r$.
    If $B_1'$ is an endorsed block, since $B_1'$ ranks no higher than $B_r$ its has view number $\leq v$. By definition an endorsed block ranks higher than a certified block if they have the same view number, thus $B_1'$ must have view number $<v$ and rank lower than $B_r$.
    By quorum intersection, at least one honest replica $h'$ voted for both $B_1$ and $B_{r+2}$, and has $\rank_{lock}$ to be the rank of $B_r$.
    However, according to the {\bf Fallback Vote} step, $h'$ will not vote for $B_1$ since $B_1'$ ranks lower than its $\rank_{lock}$, contradiction.
    
    Suppose $B_r$ is a certified block and $B_{r+2}$ is an endorsed blocks.
    Suppose that $B$ is a certified block, by Lemma~\ref{lem:1}, $B$ must has round number $\geq r+1$ otherwise $B$ will extend $B_r$. Since $B,B_r$ does not extend one another, by Lemma~\ref{lem:2}, there exists a round-$r$ block $B_r'\neq B_r$ that $B$ extends. By Lemma~\ref{lem:2}, $B_r'$ has view number $\leq v$. 
    If $B_r'$ has view number $v$, by Lemma~\ref{lem:1}, $B_r=B_r'$, contradiction.
    If $B_r'$ has view number $\leq v-1$, by quorum intersection, there exists at least one honest replica that voted for both $B_r'$ and $B_r$.
    According to the steps {\bf Vote, Fallback Vote}, after voting for $B_r'$, $h$ sets its $r_{vote}=r$ and will not vote for regular blocks of round number $\leq r$. Hence, $h$ will not vote for $B_r$ of round number $r$, contradiction.
    Suppose that $B$ is an endorsed block, by Lemma~\ref{lem:3}, $B$ and $B_{r+2}$ extend one another. By Lemma~\ref{lem:2}, no endorsed f-block can be the parent of any certified block of the same view, therefore $B$ must extend $B_r$, contradiction.

    Suppose $B_r,B_{r+2}$ are endorsed blocks.
    Suppose that $B$ is a certified block. Since $B,B_r$ both have view number $v$,
    and by definition an endorsed block ranks higher than a certified block if they have the same view number, $B$ ranks lower than $B_r$, contradiction.
    Suppose that $B$ is an endorsed block. By Lemma~\ref{lem:3}, $B$ and $B_r$ extend one another. Since $B$ ranks no lower than $B_r$ and endorsed f-blocks in the chain have consecutive round numbers, we have $B$ extends $B_r$, contradiction.
    
    Therefore, any certified or endorsed block of view number $v$ that ranks no lower than $B_r$ must extend $B_r$.
    
\end{proof}

\begin{lemma}\label{lem:5}
    If a block $B_r$ is committed by some honest replica due to a $3$-chain starting from $B_r$, and another block $B_{r'}'$ is committed by some honest replica due to a $3$-chain starting from $B_{r'}'$, then either $B_r$ extends $B_{r'}'$ or $B_{r'}'$ extends $B_r$.
\end{lemma}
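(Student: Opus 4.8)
The plan is to derive Lemma~\ref{lem:5} from a single stronger invariant, proved by strong induction on rank: \emph{whenever a block $B_r$ is committed via a $3$-chain $B_r,B_{r+1},B_{r+2}$ of consecutive rounds and a common view $v$, every certified block or endorsed f-block $B$ with $\rank(B)\ge \rank(B_r)$ extends $B_r$.} Granting this, Lemma~\ref{lem:5} is immediate: by the commit rule each of the two committed blocks is the first of such a $3$-chain, and in particular is itself a certified block or an endorsed f-block; taking the one of smaller rank as $B_r$ (without loss of generality), the other is a certified-or-endorsed block of rank at least $\rank(B_r)$, hence extends $B_r$.

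For the invariant I would induct on $\rank(B)$. The base case $\rank(B)=\rank(B_r)$ forces $B=B_r$ by Lemma~\ref{lem:1}. In the inductive step I would first dispose of the case $\mathrm{view}(B)=v$ outright, since that is precisely the content of Lemma~\ref{lem:4}. When $\mathrm{view}(B)>v$, if $B$ is a height-$2$ or height-$3$ endorsed f-block I would replace it by the height-$1$ endorsed f-block $\tilde B$ of the same view that it extends (legitimate by Lemma~\ref{lem:3}, and $\rank(\tilde B)\ge\rank(B_r)$ holds automatically because $\mathrm{view}(\tilde B)>v$), so it suffices to treat regular certified blocks and height-$1$ endorsed f-blocks $B$. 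Writing $B'$ for the parent of $B$, i.e.\ the block certified by $B.qc$ (which is again a certified block or an endorsed f-block), Lemma~\ref{lem:2} gives $\rank(B')<\rank(B)$; hence once I establish $\rank(B')\ge\rank(B_r)$ the induction hypothesis applied to $B'$ closes the argument — either $\rank(B')=\rank(B_r)$ and $B'=B_r$ by Lemma~\ref{lem:1}, or $\rank(B')>\rank(B_r)$ and $B'$ extends $B_r$ — and in both cases $B$ extends $B'$ extends $B_r$.

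So everything reduces to showing, for $\mathrm{view}(B)>v$, that $\rank(B')=B.qc.\rank\ge\rank(B_r)$. Here I would use quorum intersection between $\fc(B)$ and $\fc(B_{r+2})$ (both of size $2f+1$ out of $n=3f+1$) to obtain an honest replica $h$ that voted for both, and argue that $h$ already has $\rank_{lock}\ge\rank(B_r)$ by the time it votes for $B$. Indeed, voting for $B_{r+2}$ means $h$ received the proposal $B_{r+2}$, which carries the certificate of $B_{r+1}$, which in turn carries the certificate of $B_r$; and since $B$ has view $>v$, $h$ has already left the view-$v$ fallback before voting for $B$, and at that point the \emph{Exit Fallback} step runs \emph{Lock} on a now-endorsed certificate of $B_{r+1}$ or higher, pushing $\rank_{lock}$ to at least $\rank(B_r)$ (in the sub-case where $B_{r+2}$ is a regular block, $h$ locks on $\rank(B_r)$ already when voting for $B_{r+2}$, via the \emph{Lock} invoked inside \emph{Vote}). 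Since $\rank_{lock}$ never decreases, the check $qc.\rank\ge\rank_{lock}$ in \emph{Vote}/\emph{Fallback Vote} that $h$ performs on $B$ then yields $B.qc.\rank\ge\rank(B_r)$.

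I expect this last step to be the main obstacle. The difficulty is that \emph{Fallback Vote} on height-$2$ and height-$3$ f-blocks does not re-run \emph{Lock}, so the standard DiemBFT reasoning ("a voter of the third block of a $3$-chain is locked on the first block") does not apply verbatim; the repair is to track the certificate of $B_{r+1}$ — which every voter of $B_{r+2}$ necessarily holds, embedded in the proposal — and to observe that it only becomes endorsed, and hence only triggers \emph{Lock}, at the moment the replica exits the view-$v$ fallback. One must also case-split on the shape of the committed $3$-chain allowed by Lemma~\ref{lem:2} (all regular; a regular prefix followed by endorsed f-blocks; all endorsed f-blocks), and, when $B$ is itself a higher f-block, descend to the height-$1$ f-block of its view where the $\rank_{lock}$ check is genuinely enforced. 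Everything else is bookkeeping with Lemmas~\ref{lem:1}--\ref{lem:4}.
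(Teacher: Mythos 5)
Your proposal is correct and follows essentially the same route as the paper: the same-view case is delegated to Lemma~\ref{lem:4}, and the cross-view case is handled by locating the lowest-view block of view $>v$ on the other chain, intersecting its voting quorum with $\fc(B_{r+2})$, and using monotonicity of $\rank_{lock}$ to force that block's parent back into view $v$ where Lemma~\ref{lem:4} applies --- your strong induction on rank is just a systematized form of the paper's minimal-counterexample argument. The two subtleties you flag (descending from a height-$2$/$3$ f-block to the height-$1$ f-block of its view before invoking the $\rank_{lock}$ check, and the fact that a voter of an endorsed $B_{r+2}$ only acquires the lock on $\rank(B_r)$ at \textbf{Exit Fallback} rather than inside \textbf{Fallback Vote}) are real, and the paper's own proof silently glosses over both, so your treatment is if anything the more careful one.
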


\begin{proof}
    Suppose on the contrary that $B_r,B_{r'}'$ are committed by honest replica, but they do not extend one another.
    Suppose there exist three adjacent certified or endorsed blocks $B_r,B_{r+1},B_{r+2}$ in the chain with consecutive round numbers $r,r+1,r+2$ and the same view number $v$. 
    Suppose there also exist three adjacent certified or endorsed blocks $B_{r'},B_{r'+1}',B_{r'+2}'$ in the chain with consecutive round numbers $r',r'+1,r'+2$ and the same view number $v'$. 
    Without loss of generality, suppose that $v\leq v'$.
    
    If $v=v'$, without loss of generality, further assume that $r\leq r'$. 
    By Lemma~\ref{lem:4}, any certified or endorsed block of view number $v$ that ranks no lower than $B_r$ must extend $B_r$. Since $v=v'$ and $r\leq r'$, $B_{r'}'$ ranks no lower than $B_r$, and thus $B_{r'}'$ must extend $B_r$, contradiction.
    
    If $v<v'$, by Lemma~\ref{lem:2}, there exist blocks that $B_{r'}'$ extends, have view numbers $> v$, and do not extend $B_r$.
    Let $B$ be such a block with the smallest view number. 
    Then the parent block $B'$ of $B$ also does not extend $B_r$, and has view number $\leq v$.
    By quorum intersection, at least one honest replica $h$ voted for both $B_{r+2}$ and $B$.
    According to the {\bf Lock} step, after $h$ voted fro $B_{r+2}$ in view $v$, it sets its highest locked rank $\rank_{lock}$ to be the rank of $B_r$.
    Then, when $h$ votes for $B$, according to the steps {\bf Vote, Fallback Vote}, the rank of the parent block $B'$ must be $\geq \rank_{lock}$, and thus no lower than the rank of $B_r$.
    By definition $B'$ has view number $\leq v$, hence $B'$ must have view number $v$.
    By Lemma~\ref{lem:4}, $B'$ of view number $v$ and ranks no lower than $B_r$ must extend $B_r$, contradiction.
    
    Therefore, we have either $B_r$ extends $B_{r'}'$, or $B_r$ extends $B_{r'}'$.
    
\end{proof}

\begin{theorem}[Safety]
    If blocks $B$ and $B'$ are committed at the same height in the blockchain by honest replicas, then $B=B'$.
\end{theorem}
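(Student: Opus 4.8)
The plan is to derive Safety directly from Lemma~\ref{lem:5} together with the structural facts of Lemma~\ref{lem:2}. The point is that Lemma~\ref{lem:5} already linearizes all committed blocks: any two blocks committed by honest replicas lie on a single lineage, so it only remains to argue that a lineage contains at most one block per blockchain height.

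First I would reduce to blocks committed \emph{directly} by a $3$-chain. If an honest replica commits a block $B$ at some height, then by the \textbf{Commit} rule this happened because the replica saw three adjacent certified/endorsed blocks with consecutive round numbers and the same view, whose head $\hat B$ satisfies: $\hat B$ extends $B$ (possibly $\hat B = B$), and $\hat B$ is committed directly via that $3$-chain. Likewise pick a directly committed $\hat B'$ extending $B'$. By Lemma~\ref{lem:5}, $\hat B$ and $\hat B'$ are comparable under ``extends''; without loss of generality $\hat B'$ extends $\hat B$, and since $\hat B$ extends $B$ this gives that $\hat B'$ extends both $B$ and $B'$.

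Next I would invoke Lemma~\ref{lem:2}: the ancestors of $\hat B'$, following the chain of QCs and endorsed f-QCs back to the genesis block of round $0$, form a single lineage in which adjacent blocks have consecutive round numbers. Hence the depth of a block in this lineage — equivalently, its index in the committed log, which is precisely what ``height in the blockchain'' means — equals its round number, so the lineage contains at most one block at each height. Since $B$ and $B'$ both lie on this lineage (they are ancestors of $\hat B'$) and are committed at the same height, they have the same round number, and therefore $B = B'$.

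The main obstacle I anticipate is purely the bookkeeping around ``height'': making explicit that committing via a $3$-chain commits an entire prefix of a lineage, that this prefix is indexed exactly by round number because of the consecutive-round property of Lemma~\ref{lem:2}, and that an indirectly committed block always has a directly committed descendant lying on the same lineage so that Lemma~\ref{lem:5} is applicable. Once that correspondence between blockchain height and round number along a lineage is pinned down, Safety is immediate.
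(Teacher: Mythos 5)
Your proof is correct and follows essentially the same route as the paper's: reduce to directly committed blocks, apply Lemma~\ref{lem:5} to put them on a common lineage, and conclude equality at a given height. You actually fill in a step the paper leaves implicit, namely using Lemma~\ref{lem:2}'s consecutive-round property to justify that a single lineage contains at most one block per blockchain height.
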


\begin{proof}
    Suppose that $B$ is committed due to a block $B_l$ of round $l$ being directly committed by a $3$-chain, and $B'$ is committed due to a block $B_k$ of round $k$ being directly committed by a $3$-chain. 
    By Lemma~\ref{lem:5}, either $B_l$ extends $B_k$ or $B_l$ extends $B_k$, which implies that $B=B'$.
\end{proof}

\begin{lemma}\label{lem:6}
    If all honest replicas enter the asynchronous fallback of view $v$ by setting $\mode=true$, then eventually they all exit the fallback and set $\mode=false$.
    Moreover, with probability $2/3$, at least one honest replica commits a new block after exiting the fallback.
\end{lemma}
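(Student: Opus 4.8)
The plan is to prove, in sequence, (a) that the view-$v$ fallback always terminates (every honest replica sets $\mode=false$), and (b) that conditioned on the coin outcome this is followed by a new commit with probability at least $2/3$.

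For (a) I would argue by ``progress or termination''. Suppose no view-$v$ coin-QC ever forms. The only ways an honest replica leaves view $v$ are obtaining a view-$v$ coin-QC or an f-TC of a strictly higher view, and the latter would require honest replicas already past view $v$; so in this branch every honest replica stays in view $v$ with $\mode=true$ forever, and (by reliable delivery of the $2f{+}1$ \timeout messages underlying the view-$v$ f-TC) every honest replica eventually runs {\bf Enter Fallback} and proposes its height-$1$ f-block. The crux is then to show that each honest replica's height-$1$ f-block eventually clears the gate $qc.\rank\ge\rank_{lock}$ at every honest voter: while $\mode=true$ no honest replica votes for regular blocks and $\le f$ Byzantine cannot form a regular QC, and no view-$v$ f-QC is yet endorsed (that needs the coin-QC), so honest replicas' $\rank_{lock}$ values do not grow past what honest replicas collectively already know; combining this with the standard DiemBFT invariant that $\rank_{lock}=\rho$ at an honest replica implies $\ge f{+}1$ honest replicas already hold a QC of rank $\ge\rho$, and with the dissemination of $\hqc$ inside \timeout messages and height-$1$ proposals, every honest replica's $\hqc$ eventually reaches rank $\ge$ every honest replica's $\rank_{lock}$. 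The remaining {\bf Fallback Vote} conditions hold by construction for honest proposers, so by reliability all $\ge 2f{+}1$ honest f-chains become fully certified; then every honest replica collects $2f{+}1$ height-$3$ view-$v$ f-QCs, multicasts a coin share, and $f{+}1$ of these form a coin-QC -- contradiction. Hence a view-$v$ coin-QC exists; it is multicast in {\bf Exit Fallback}, reaches every honest replica while it is still in view $v$, and makes it set $\mode=false$.

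For (b), let $L$ be the replica elected by the view-$v$ coin-QC ($L$ is a function of $v$ and the coin only, so all view-$v$ coin-QCs elect the same $L$). Pick an honest replica $x$ among the first $f{+}1$ replicas to multicast a view-$v$ coin share -- one exists since $f{+}1>f$. Having signed a coin share, $x$ had collected $2f{+}1$ valid height-$3$ view-$v$ f-QCs; since by quorum intersection and the {\bf Fallback Vote} height rule at most one height-$3$ f-block per proposer per view can be certified, these belong to $2f{+}1$ \emph{distinct} proposers $P_x$, all of which $x$ retains. By part (a), $x$ eventually exits and runs {\bf Lock}; if $L\in P_x$ then $x$'s stored height-$3$ f-QC of $L$ becomes endorsed, hence is handled as a QC, and following parent links (fetching the at most three f-blocks, held by the $\ge f{+}1$ replicas that voted for them, if needed) gives the chain $\fblock_{1,L}\!\leftarrow\!\fblock_{2,L}\!\leftarrow\!\fblock_{3,L}$, whose rounds are consecutive and whose view is $v$ by the {\bf Fallback Vote} checks; the $3$-chain {\bf Commit} rule then makes $x$ commit $\fblock_{1,L}$, whose transaction batch is fresh, i.e. a genuinely new commit. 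Finally, when $x$ signed its share fewer than $f{+}1$ shares existed anywhere, so no view-$v$ coin-QC had yet formed and $L$ was still unpredictable to the adversary (probability $\le 1/n$ to guess), whereas $P_x$ was already fixed; hence $P_x$ and $L$ are independent and $\Pr[L\in P_x]=|P_x|/n=(2f{+}1)/(3f{+}1)\ge 2/3$.

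The step I expect to be the real obstacle is the frozen-$\rank_{lock}$ argument in part (a): ruling out that the $qc.\rank\ge\rank_{lock}$ gate on height-$1$ f-blocks stalls honest proposers forever, which needs both the ``no new QC during a fallback'' observation and careful bookkeeping of how an honest replica's lock rank could have been acquired -- possibly through an endorsed f-QC of an earlier view -- so that it is always backed by enough honest $\hqc$'s that are re-disseminated in view $v$. The rest (coin- and f-QC liveness under reliable channels, f-block availability, and the probability computation) is comparatively routine.
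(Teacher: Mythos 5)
Your proposal is correct and follows essentially the same route as the paper's proof: part (a) is the paper's quorum-intersection argument that any honest lock rank is witnessed by $f{+}1$ honest holders of a matching $\hqc$, at least one of whose \timeout messages lands in every honest replica's f-TC set, so every honest height-$1$ f-block clears the $qc.\rank\ge\rank_{lock}$ gate and the chains, height-$3$ f-QCs, coin shares and coin-QC follow; part (b) is the paper's $(2f{+}1)/(3f{+}1)\ge 2/3$ count over the completed f-chains held by a replica that triggers the coin. Your write-up is in fact somewhat more careful than the paper's on the coin-timing/independence of $P_x$ and $L$ and on locks inherited from endorsed f-QCs of earlier views, a case the paper's proof silently omits.
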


\begin{proof}
    Suppose that all honest replicas set $\mode=true$ and have $v_{cur}=v$.
    
    We first show that all honest replicas will vote for the height-$1$ fallback-block proposed by any honest replica.
    Suppose on the contrary that some honest replica $h$ does not vote for the height-$1$ f-block proposed by some honest replica $h'$.
    According to the step {\bf Enter Fallback, Fallback Vote}, the only possibility is that $qc.\rank< \rank_{lock}$, where $qc$ is the QC contained in the height-$1$ f-block by $h'$ and $\rank_{lock}$ is the highest locked rank of $h$ when voting for the f-block. 
    Suppose that $\rank_{lock}=(v, r)$. 
    According to the protocol, the only step for $h$ to set its $\rank_{lock}$ is the {\bf Lock} step, and $h$ must receive a QC for some block $B$ of round $r+1$ since all adjacent certified blocks have consecutive round numbers by Lemma~\ref{lem:2}
    According to the {\bf Timer and Timeout} step, honest replicas set $\mode\gets false$ when sending the timeout message for view $v$, and will not vote for any regular blocks when $\mode=true$ according to the {\bf Vote} step. 
    Since $B$ of round $r+1$ is certified, by quorum intersection, at least one honest replica votes for $B$ and then sends a timeout message with $\hqc$ of rank $(v,r)$ which is received by $h'$. Then, the highest ranked $qc$ among all timeout messages received by $h'$ should have rank $qc.\rank \geq (v,r)=\rank_{lock}$, contradiction.
    Hence, all honest replicas will vote for the height-$1$ fallback-block proposed by any honest replica.
    
    After the height=$1$ f-block is certified with $2f+1$ votes that forms an f-QC $\fqc_1$, according to the {\bf Fallback Propose} step, replica $i$ can propose a height-$2$ f-block extending the height-$2$. All honest replicas will vote for the height-$2$ according to the {\bf Fallback Vote} step, and thus it will be certified with $2f+1$ votes.
    Then similarly replica $i$ can propose a height-$3$ f-block which will be certified by $2f+1$ votes, and replica $i$ will multicast the f-QC for this height-$3$ f-block.
    
    Since any honest replica $i$ can obtain a height-$3$ f-QC and multicast it, any honest replica will receive $2f+1$ valid height-$3$ f-QCs and then sign and multicast a coin share of view $v$ in the {\bf Leader Election} step. 
    When the $\coinqc$ of view $v$ that elects some replica $L$ is received or formed at any honest replica, at least one honest replica has received $2f+1$ height-$3$ f-QCs before sending the coin share, which means that $2f+1$ fallback-chains have $3$ f-blocks certified.
    All honest replicas will receive the $\coinqc$ eventually due to the forwarding.
    According to the {\bf Exit Fallback} step, they will set $\mode=false$ and exit the fallback.
    
    Since the coin-QC elects any replica to be the leader with probability $1/n$, 
    and at least one honest replica receives $2f+1$ height-$3$ f-QCs among all $3f+1$ f-chains. 
    With probability $2/3$, the honest replica has the height-$3$ f-QC of the f-chain by the elected leader, thus have the $3$ f-blocks endorsed which have the same view number and consecutive round numbers.
    Therefore, the honest replica can commit the height-$1$ f-block according to the {\bf Commit} step.
    
\end{proof}

\begin{theorem}[Liveness]
    All honest replicas keep committing new blocks with high probability.
\end{theorem}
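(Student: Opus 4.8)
The plan is to show that the protocol is never silent for long: at every point in an execution, either the Steady State of the current view is forced to produce a commit, or all honest replicas time out and trigger an asynchronous fallback, which by Lemma~\ref{lem:6} commits with probability at least $2/3$. Chaining this observation over the (possibly infinite) sequence of views gives the statement.

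\textbf{Step 1 --- a Steady-State dichotomy.} I would fix a view $v$ and a time after which all honest replicas stay in the Steady State of view $v$, and argue that one of two things must occur. By the \emph{Vote} rule every certified regular block of round $r$ carries a QC for round $r-1$, and by Lemma~\ref{lem:1} the certified block of each round in view $v$ is unique, so the certified regular blocks of view $v$ occupy a contiguous interval of rounds. If that interval is infinite it contains three consecutive rounds, hence a $3$-chain of view $v$ forms; since QCs propagate inside proposals, some honest replica then commits by the \emph{Commit} rule. If it is finite, no honest replica ever receives a fresh view-$v$ QC again, so each one's current round stabilizes, its round timer expires, it sets $\mode=true$, and it multicasts a timeout carrying view $v$; the $2f+1$ honest timeouts form a fallback-TC of view $v$, and since every honest replica has $v_{cur}=v$ each executes \emph{Enter Fallback} --- which is exactly the hypothesis of Lemma~\ref{lem:6}.

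\textbf{Step 2 --- composition over views.} I would then distinguish finitely many versus infinitely many fallbacks in the execution. If finitely many, after the last one all honest replicas stay in the Steady State of a single fixed view forever; by Step 1 this cannot stall (a stall would produce one more fallback), so it certifies blocks of consecutive rounds indefinitely and therefore commits infinitely often. If infinitely many, I would note that \emph{Enter Fallback}/\emph{Exit Fallback} raise the honest view by exactly one per fallback while honest timeout shares exist only for the current view, so the fallbacks occur in strictly increasing views $v_1<v_2<\dots$; by Step 1 each is entered by all honest replicas, and by Lemma~\ref{lem:6} it terminates and yields a new commit with probability at least $2/3$ using the fresh, unpredictable coin of that view. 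As the coins of distinct views are independent and the adversary is computationally bounded, the probability that none of the first $k$ fallbacks commits is at most $(1/3)^k$, so with probability $1$ infinitely many of them do. Either way infinitely many blocks are committed with probability $1$; and since the certificates of a committed block are forwarded and every later $3$-chain commits all of its ancestors, every honest replica eventually commits at every height, i.e., all honest replicas keep committing new blocks.

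\textbf{Expected main obstacle.} The delicate part is Step 1: showing that a non-progressing Steady State genuinely forces all $2f+1$ honest replicas to time out in the \emph{same} view $v$ --- that they are not partially kept alive by Byzantine-assisted quorums and do not drift apart in view number --- so that the triggered fallback matches the preconditions of Lemma~\ref{lem:6} verbatim. Coupled with this is the bookkeeping of Step 2 showing that consecutive fallbacks traverse consecutive views, which is precisely what makes their $2/3$-success events independent; here one must be careful that a timeout share of view $v$ can only be produced by an honest replica currently in view $v$, so no coalition of $f$ Byzantine replicas can manufacture an f-TC of a view no honest replica is in.
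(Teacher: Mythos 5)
Your proposal is correct and follows essentially the same route as the paper: a per-view dichotomy between a Steady State that certifies blocks of consecutive rounds forever (hence commits via $3$-chains) and a stall that drives all honest replicas into the asynchronous fallback of that view, where Lemma~\ref{lem:6} guarantees termination and a commit with probability $2/3$, composed over the sequence of views (the paper phrases the composition as induction on the view number). Your explicit treatment of the "delicate part" --- that a stalled view forces all $2f+1$ honest replicas to time out in the \emph{same} view so that Lemma~\ref{lem:6}'s hypothesis is met --- is in fact more careful than the paper's own one-line version of that step.
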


\begin{proof}
    We prove by induction on the view numbers.
    
    We first prove for the base case where all honest replicas start their protocol with view number $v=0$.
    If all honest replicas eventually all enter the asynchronous fallback, by Lemma~\ref{lem:6}, they eventually all exit the fallback, and a new block is committed at least one honest replica with probability $2/3$. According to the {\bf Exit Fallback} step, all honest replicas enter view $v=1$ after exiting the fallback.
    If at least one honest replica never set $\mode=true$, this implies that the sequence of QCs produced in view $0$ is infinite. By Lemma~\ref{lem:2}, the QCs have consecutive round numbers, and thus all honest replicas keep committing new blocks after receiving these QCs.
    
    Now assume the theorem is true for view $v=0,...,k-1$. Consider the case where all honest replicas enter the view $v=k$.
    By the same argument for the $v=0$ base case, honest replicas either all enter the fallback and the next view with a new block committed with $2/3$ probability, or keeps committing new blocks in view $k$. 
    When the network is synchronous and the leaders are honest, the block proposed in the Steady State will always extend the highest QC, and thus voted by all honest replicas.
    Therefore, by induction, honest replicas keep committing new blocks with high probability.
    
\end{proof}

\begin{theorem}[Efficiency]
    During the periods of synchrony with honest leaders, the amortized communication complexity per block decision is $O(n)$.
    During periods of asynchrony, the expected communication complexity per block decision is $O(n^2)$.
\end{theorem}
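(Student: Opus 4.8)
The plan is to treat the two regimes separately; in each one the only communication is protocol messages, and every such message carries a constant number of (threshold) signatures and hashes beyond the transaction payload that is charged once to the block it belongs to, so it suffices to count messages.

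\emph{Synchronous regime with honest leaders.} First I would argue that no timeout fires and hence the fallback is never entered. With round timers set to a sufficiently large constant multiple of $\Delta$ (as in DiemBFT, possibly after finitely many timeouts during which the timers grow), for an honest leader $L_r$ the proposal of round $r$ reaches every honest replica, $2f+1$ votes reach $L_{r+1}$, a round-$r$ QC forms, and every honest replica advances to round $r+1$ through \textbf{Advance Round} before its timer $T_r$ expires. (Any fallback already in progress when synchrony begins completes within $O(n^2)$ by the count below, a one-time additive term that does not affect the amortized rate.) By the four-round leader-rotation rule the same honest replica leads four consecutive rounds, so three adjacent certified blocks with consecutive round numbers and equal view number appear and a block is committed by the \textbf{Commit} rule --- at least once every four rounds. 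The per-round cost is then immediate: \textbf{Propose} is one leader-to-all multicast ($O(n)$ messages), \textbf{Vote} is at most $n$ replica-to-leader messages ($O(n)$), and nothing else is sent, so each round costs $O(n)$. Since a positive constant fraction of rounds commit a new block, the amortized communication per block decision is $O(n)$.

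\emph{Asynchronous regime.} Here I would first bound the cost of a single fallback of some view $v$. \textbf{Timer and Timeout} contributes one all-to-all multicast of timeout messages, $O(n^2)$. Each replica $j$ then builds its own f-chain: \textbf{Enter Fallback} and \textbf{Fallback Propose} send three leader-to-all multicasts (the height-$1$, $2$, $3$ f-blocks), \textbf{Fallback Vote} sends at most $n$ votes back to $j$ at each height, and \textbf{Fallback Propose} multicasts the height-$3$ f-QC --- $O(n)$ messages for $j$'s chain, hence $O(n^2)$ over all $n$ chains. Finally \textbf{Leader Election} and \textbf{Exit Fallback} each add an all-to-all multicast (the coin shares, then the coin-QC), another $O(n^2)$. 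So one fallback costs $O(n^2)$ communication. Then I would amortize: by Lemma~\ref{lem:6} every honest replica that enters the fallback provably exits it, and, since the elected leader is uniform over the $n$ replicas and the coins of distinct views are independent, with probability at least $2/3$ at least one honest replica commits a new block per fallback. Hence the number of fallbacks until the next committed block is stochastically dominated by a geometric random variable with success probability $2/3$, so in expectation $O(1)$ fallbacks, and therefore $O(n^2)$ communication, suffice per block decision.

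I expect the main obstacle to be the synchronous-regime claim that the fallback is never triggered: it requires pinning down how large the round timers must be relative to $\Delta$ and then threading this through \textbf{Advance Round}, the four-round leader rotation, and the 2-chain lock / 3-chain commit rules to conclude that honest-leader rounds both complete before their timers expire and produce commits at a positive rate. The asynchronous side is essentially bookkeeping once Lemma~\ref{lem:6} is available, the only subtlety being that every message is $O(1)$-sized (apart from the block payload charged once), so that counting messages is legitimate.
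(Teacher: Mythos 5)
Your proof is correct and follows essentially the same route as the paper's: split into the two regimes, argue that no timeouts (hence no fallback) occur under synchrony with honest leaders so each round costs one $O(n)$ proposal multicast plus $O(n)$ votes, and bound a single fallback by $O(n^2)$ and amortize via the probability-$2/3$ commit guarantee of Lemma~\ref{lem:6}. You are merely more explicit than the paper on two points it leaves implicit --- why commits occur at a constant rate per round (via the four-round leader rotation) and why the expected number of fallbacks per decision is $O(1)$ (the geometric-domination argument) --- which strengthens rather than changes the argument.
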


\begin{proof}
    When the network is synchronous and leaders are honest, no honest replica will multicast timeout messages. In every round, the designated leader multicast its proposal of size $O(1)$ (due to the use of threshold signatures for QC), and all honest replicas send the vote of size $O(1)$ to the next leader.
    Hence the communication cost is $O(n)$ per round and per block decision.
    
    When the network is asynchronous and honest replicas enter the asynchronous fallback, each honest replica in the fallback only broadcast $O(1)$ number of messages, and each message has size $O(1)$. Hence, each instance of the asynchronous fallback has communication cost $O(n^2)$, and will commit a new block with probability $2/3$.
    Therefore, the expected communication complexity per block decision is $O(n^2)$.
\end{proof}

\section{Get $2$-chain Commit for Free}\label{app:2-chain}

In this section, we show how to slightly modify the $3$-chain commit protocol in Figure~\ref{fig:fallback_protocol} into a $2$-chain commit protocol, strictly improving the commit latency without losing any of the nice guarantees.
Earlier solutions that can be adapted into the chain-based BFT SMR framework with $2$-chain commit either pay a quadratic cost for the view-change~\cite{castro1999practical, jalalzai2020fast}, or sacrifices responsiveness (cannot proceed with network speed)~\cite{buchman2016tendermint, buterin2017casper, yin2019hotstuff}.
Since our protocol already pays the quadratic cost, which is inevitable for the asynchronous view-change~\cite{abraham2019asymptotically}, we can get the improvement over the commit latency of the protocol from $3$-chain commit ($6$ rounds) to $2$-chain commit ($4$ rounds), without sacrificing anything else.

\begin{figure}[h!]
    \centering
    \input{2-chain-protocol}
    \caption{DiemBFT with Asychronous Fallback and $2$-chain Commit. Differences Marked in \red{red}.}
    \label{fig:2-chain}
\end{figure}

The difference compared with the protocol in Figure~\ref{fig:fallback_protocol} with $3$-chain commit is marked red in Figure~\ref{fig:2-chain}.
Intuitively, $2$-chain commit means the protocol only needs two adjacent certified blocks in the same view to commit the first block, and to ensure safety, now replicas lock on the highest QC's rank ($1$-chain lock) instead of the parent QC's rank of the highest QC.
Moreover, since now a block can be committed with $2$-chain, the asynchronous fallback only needs every replica to build a fallback-chain of $2$ fallback-blocks instead of $3$, to ensure progress made during the fallback.
Hence, $2$ rounds can be reduced during the fallback if the protocol uses a $2$-chain commit.
Another main difference is that, with $2$-chain commit and $1$-chain lock, only one honest replica $h$ may have the highest QC among all honest replicas when entering the asynchronous fallback. Then, only the fallback-chain proposed by $h$ will get $2f+1$ votes from all honest replica and be completed, while other f-chains proposed by other honest replica extending a lower-ranked QC may not get $2f+1$ votes, since $h$ will not vote due to the $1$-chain locking.
A straightforward solution is to allow replicas to adopt f-chains from other replicas, and build their f-chains on top of the first certified f-chain they know.
With the changes above, even if only one fallback-chain is live, all honest replicas can complete their fallback-chain and ensure the liveness of the asynchronous fallback.

To summarize, the $2$-chain-commit version of the protocol strictly improves the latency of the $3$-chain commit version, by reducing the commit latency by $2$ rounds for both Steady State and Asynchronous Fallback.
The correctness proof of the $2$-chain commit protocol would be analogous to that for the $3$-chain commit version, and we omit it here for brevity.

\section{Related Works}
\label{sec:relatedwork}

\paragraph{Byzantine fault tolerant replication.}
BFT SMR has been studied extensively in the literature.
For partial synchrony, PBFT is the first practical protocol with $O(n^2)$ communication cost per decision under a stable leader and $O(n^3)$ communication cost for view-change.
A sequence of efforts\cite{kotla2007zyzzyva,abraham2018revisiting,buchman2016tendermint,buterin2017casper,gueta2019sbft,yin2019hotstuff} have been made to reduce the communication cost of the BFT SMR protocols, with the state-of-the-art being HotStuff~\cite{yin2019hotstuff} that has $O(n)$ cost for decisions and view-changes under synchrony and honest leaders, and $O(n^2)$ cost view-changes otherwise.
For asynchrony, several recent proposals focus on improving the communication complexity and latency, including HoneyBadgerBFT~\cite{miller2016honey}, Dumbo-BFT~\cite{guo2020dumbo}, VABA~\cite{abraham2019asymptotically}, Dumbo-MVBA~\cite{lu2020dumbo}, ACE~\cite{spiegelman2019ace} and Aleph~\cite{gkagol2019aleph}. 
The state-of-the-art protocols for asynchronous SMR have $O(n^2)$ cost per decision~\cite{spiegelman2019ace}.
There are also recent works on BFT SMR under synchrony~\cite{hanke2018dfinity,Abraham2020SyncHS,abraham2020optimal,nibesh2020optimality}.

\paragraph{Flexibility in BFT protocols.}
A related line of work investigates synchronous BFT protocol with an asynchronous fallback, focusing on achieving optimal resilience tradeoffs between synchrony and asynchrony~\cite{blum2019synchronous,blum2020network} or optimal communication complexity~\cite{spiegelman2020search}.
FBFT~\cite{malkhi2019flexible} proposes one BFT SMR solution supporting clients with various fault and synchronicity beliefs, but the guarantees only hold for the clients with the correct assumptions.
Another related line of work on optimistic BFT protocols focuses on including a fast path in BFT protocols to improve the performance such as communication complexity and latency under the optimistic scenarios, for asynchronous protocols~\cite{kursawe2005optimistic, kursawe2002optimistic}, partially synchronous protocols~\cite{abd2005fault, kotla2007zyzzyva, guerraoui2010next} and synchronous protocols~\cite{pass2018thunderella, cryptoeprint:2018:980, nibesh2020optimality, cryptoeprint:2020:406}.

\section{Conclusion}

We present an asynchronous view-change protocol that improves the liveness of the state-of-the-art partially synchronous BFT SMR protocol.
As a result, we obtain a BFT SMR protocol that has linear communication cost under synchrony, quadratic communication cost under asynchrony, and remains always live.

\bibliographystyle{ACM-Reference-Format}
\bibliography{references}


\begin{thebibliography}{34}


\ifx \showCODEN    \undefined \def \showCODEN     #1{\unskip}     \fi
\ifx \showDOI      \undefined \def \showDOI       #1{#1}\fi
\ifx \showISBNx    \undefined \def \showISBNx     #1{\unskip}     \fi
\ifx \showISBNxiii \undefined \def \showISBNxiii  #1{\unskip}     \fi
\ifx \showISSN     \undefined \def \showISSN      #1{\unskip}     \fi
\ifx \showLCCN     \undefined \def \showLCCN      #1{\unskip}     \fi
\ifx \shownote     \undefined \def \shownote      #1{#1}          \fi
\ifx \showarticletitle \undefined \def \showarticletitle #1{#1}   \fi
\ifx \showURL      \undefined \def \showURL       {\relax}        \fi
\providecommand\bibfield[2]{#2}
\providecommand\bibinfo[2]{#2}
\providecommand\natexlab[1]{#1}
\providecommand\showeprint[2][]{arXiv:#2}

\bibitem[\protect\citeauthoryear{Abd-El-Malek, Ganger, Goodson, Reiter, and
  Wylie}{Abd-El-Malek et~al\mbox{.}}{2005}]%
        {abd2005fault}
\bibfield{author}{\bibinfo{person}{Michael Abd-El-Malek},
  \bibinfo{person}{Gregory~R Ganger}, \bibinfo{person}{Garth~R Goodson},
  \bibinfo{person}{Michael~K Reiter}, {and} \bibinfo{person}{Jay~J Wylie}.}
  \bibinfo{year}{2005}\natexlab{}.
\newblock \showarticletitle{Fault-scalable Byzantine fault-tolerant services}.
  In \bibinfo{booktitle}{\emph{Proceedings of the twentieth ACM Symposium on
  Operating Systems Principles (SOSP)}}. \bibinfo{pages}{59--74}.
\newblock


\bibitem[\protect\citeauthoryear{Abraham, Gueta, Malkhi, and Martin}{Abraham
  et~al\mbox{.}}{2018}]%
        {abraham2018revisiting}
\bibfield{author}{\bibinfo{person}{Ittai Abraham}, \bibinfo{person}{Guy Gueta},
  \bibinfo{person}{Dahlia Malkhi}, {and} \bibinfo{person}{Jean-Philippe
  Martin}.} \bibinfo{year}{2018}\natexlab{}.
\newblock \showarticletitle{Revisiting fast practical byzantine fault
  tolerance: Thelma, velma, and zelma}.
\newblock \bibinfo{journal}{\emph{arXiv preprint arXiv:1801.10022}}
  (\bibinfo{year}{2018}).
\newblock


\bibitem[\protect\citeauthoryear{Abraham, Malkhi, Nayak, Ren, and Yin}{Abraham
  et~al\mbox{.}}{2020a}]%
        {Abraham2020SyncHS}
\bibfield{author}{\bibinfo{person}{I. Abraham}, \bibinfo{person}{D. Malkhi},
  \bibinfo{person}{Kartik Nayak}, \bibinfo{person}{Ling Ren}, {and}
  \bibinfo{person}{Maofan Yin}.} \bibinfo{year}{2020}\natexlab{a}.
\newblock \showarticletitle{Sync HotStuff: Simple and Practical Synchronous
  State Machine Replication}. In \bibinfo{booktitle}{\emph{2020 IEEE Symposium
  on Security and Privacy (SP)}}. \bibinfo{pages}{106--118}.
\newblock


\bibitem[\protect\citeauthoryear{Abraham, Malkhi, and Spiegelman}{Abraham
  et~al\mbox{.}}{2019}]%
        {abraham2019asymptotically}
\bibfield{author}{\bibinfo{person}{Ittai Abraham}, \bibinfo{person}{Dahlia
  Malkhi}, {and} \bibinfo{person}{Alexander Spiegelman}.}
  \bibinfo{year}{2019}\natexlab{}.
\newblock \showarticletitle{Asymptotically optimal validated asynchronous
  byzantine agreement}. In \bibinfo{booktitle}{\emph{Proceedings of the 2019
  ACM Symposium on Principles of Distributed Computing (PODC)}}.
  \bibinfo{pages}{337--346}.
\newblock


\bibitem[\protect\citeauthoryear{Abraham, Nayak, Ren, and Xiang}{Abraham
  et~al\mbox{.}}{2020b}]%
        {abraham2020optimal}
\bibfield{author}{\bibinfo{person}{Ittai Abraham}, \bibinfo{person}{Kartik
  Nayak}, \bibinfo{person}{Ling Ren}, {and} \bibinfo{person}{Zhuolun Xiang}.}
  \bibinfo{year}{2020}\natexlab{b}.
\newblock \bibinfo{title}{Byzantine Agreement, Broadcast and State Machine
  Replication with Near-optimal Good-case Latency}.
\newblock
\newblock
\showeprint[arxiv]{2003.13155}~[cs.CR]


\bibitem[\protect\citeauthoryear{Baudet, Ching, Chursin, Danezis, Garillot, Li,
  Malkhi, Naor, Perelman, and Sonnino}{Baudet et~al\mbox{.}}{[n.d.]}]%
        {baudet2019state}
\bibfield{author}{\bibinfo{person}{Mathieu Baudet}, \bibinfo{person}{Avery
  Ching}, \bibinfo{person}{Andrey Chursin}, \bibinfo{person}{George Danezis},
  \bibinfo{person}{Fran{\c{c}}ois Garillot}, \bibinfo{person}{Zekun Li},
  \bibinfo{person}{Dahlia Malkhi}, \bibinfo{person}{Oded Naor},
  \bibinfo{person}{Dmitri Perelman}, {and} \bibinfo{person}{Alberto Sonnino}.}
  \bibinfo{year}{[n.d.]}\natexlab{}.
\newblock \bibinfo{title}{State machine replication in the Diem Blockchain}.
\newblock
\newblock


\bibitem[\protect\citeauthoryear{Blum, Katz, and Loss}{Blum
  et~al\mbox{.}}{2019}]%
        {blum2019synchronous}
\bibfield{author}{\bibinfo{person}{Erica Blum}, \bibinfo{person}{Jonathan
  Katz}, {and} \bibinfo{person}{Julian Loss}.} \bibinfo{year}{2019}\natexlab{}.
\newblock \showarticletitle{Synchronous consensus with optimal asynchronous
  fallback guarantees}. In \bibinfo{booktitle}{\emph{Theory of Cryptography
  Conference (TCC)}}. Springer, \bibinfo{pages}{131--150}.
\newblock


\bibitem[\protect\citeauthoryear{Blum, Katz, and Loss}{Blum
  et~al\mbox{.}}{2020}]%
        {blum2020network}
\bibfield{author}{\bibinfo{person}{Erica Blum}, \bibinfo{person}{Jonathan
  Katz}, {and} \bibinfo{person}{Julian Loss}.} \bibinfo{year}{2020}\natexlab{}.
\newblock \showarticletitle{Network-Agnostic State Machine Replication}.
\newblock \bibinfo{journal}{\emph{arXiv preprint arXiv:2002.03437}}
  (\bibinfo{year}{2020}).
\newblock


\bibitem[\protect\citeauthoryear{Buchman}{Buchman}{2016}]%
        {buchman2016tendermint}
\bibfield{author}{\bibinfo{person}{Ethan Buchman}.}
  \bibinfo{year}{2016}\natexlab{}.
\newblock \emph{\bibinfo{title}{Tendermint: Byzantine fault tolerance in the
  age of blockchains}}.
\newblock \bibinfo{thesistype}{Ph.D. Dissertation}.
\newblock


\bibitem[\protect\citeauthoryear{Buterin and Griffith}{Buterin and
  Griffith}{2017}]%
        {buterin2017casper}
\bibfield{author}{\bibinfo{person}{Vitalik Buterin} {and}
  \bibinfo{person}{Virgil Griffith}.} \bibinfo{year}{2017}\natexlab{}.
\newblock \showarticletitle{Casper the friendly finality gadget}.
\newblock \bibinfo{journal}{\emph{arXiv preprint arXiv:1710.09437}}
  (\bibinfo{year}{2017}).
\newblock


\bibitem[\protect\citeauthoryear{Cachin, Kursawe, Petzold, and Shoup}{Cachin
  et~al\mbox{.}}{2001}]%
        {cachin2001secure}
\bibfield{author}{\bibinfo{person}{Christian Cachin}, \bibinfo{person}{Klaus
  Kursawe}, \bibinfo{person}{Frank Petzold}, {and} \bibinfo{person}{Victor
  Shoup}.} \bibinfo{year}{2001}\natexlab{}.
\newblock \showarticletitle{Secure and efficient asynchronous broadcast
  protocols}. In \bibinfo{booktitle}{\emph{Annual International Cryptology
  Conference}}. Springer, \bibinfo{pages}{524--541}.
\newblock


\bibitem[\protect\citeauthoryear{Castro and Liskov}{Castro and Liskov}{1999}]%
        {castro1999practical}
\bibfield{author}{\bibinfo{person}{Miguel Castro} {and}
  \bibinfo{person}{Barbara Liskov}.} \bibinfo{year}{1999}\natexlab{}.
\newblock \showarticletitle{Practical Byzantine fault tolerance}. In
  \bibinfo{booktitle}{\emph{Proceedings of the third symposium on Operating
  Systems Design and Implementation (NSDI)}}. USENIX Association,
  \bibinfo{pages}{173--186}.
\newblock


\bibitem[\protect\citeauthoryear{Chan, Pass, and Shi}{Chan
  et~al\mbox{.}}{2018}]%
        {cryptoeprint:2018:980}
\bibfield{author}{\bibinfo{person}{T-H.~Hubert Chan}, \bibinfo{person}{Rafael
  Pass}, {and} \bibinfo{person}{Elaine Shi}.} \bibinfo{year}{2018}\natexlab{}.
\newblock \bibinfo{title}{PiLi: An Extremely Simple Synchronous Blockchain}.
\newblock \bibinfo{howpublished}{Cryptology ePrint Archive, Report 2018/980}.
\newblock
\newblock
\shownote{\url{https://eprint.iacr.org/2018/980}.}


\bibitem[\protect\citeauthoryear{Dwork, Lynch, and Stockmeyer}{Dwork
  et~al\mbox{.}}{1988}]%
        {dwork1988consensus}
\bibfield{author}{\bibinfo{person}{Cynthia Dwork}, \bibinfo{person}{Nancy
  Lynch}, {and} \bibinfo{person}{Larry Stockmeyer}.}
  \bibinfo{year}{1988}\natexlab{}.
\newblock \showarticletitle{Consensus in the presence of partial synchrony}.
\newblock \bibinfo{journal}{\emph{Journal of the ACM (JACM)}}
  \bibinfo{volume}{35}, \bibinfo{number}{2} (\bibinfo{year}{1988}),
  \bibinfo{pages}{288--323}.
\newblock


\bibitem[\protect\citeauthoryear{G{\k{a}}gol, Le{\'s}niak, Straszak, and
  {\'S}wi{\k{e}}tek}{G{\k{a}}gol et~al\mbox{.}}{2019}]%
        {gkagol2019aleph}
\bibfield{author}{\bibinfo{person}{Adam G{\k{a}}gol}, \bibinfo{person}{Damian
  Le{\'s}niak}, \bibinfo{person}{Damian Straszak}, {and}
  \bibinfo{person}{Micha{\l} {\'S}wi{\k{e}}tek}.}
  \bibinfo{year}{2019}\natexlab{}.
\newblock \showarticletitle{Aleph: Efficient Atomic Broadcast in Asynchronous
  Networks with Byzantine Nodes}. In \bibinfo{booktitle}{\emph{Proceedings of
  the 1st ACM Conference on Advances in Financial Technologies (AFT)}}.
  \bibinfo{pages}{214--228}.
\newblock


\bibitem[\protect\citeauthoryear{Guerraoui, Kne{\v{z}}evi{\'c}, Qu{\'e}ma, and
  Vukoli{\'c}}{Guerraoui et~al\mbox{.}}{2010}]%
        {guerraoui2010next}
\bibfield{author}{\bibinfo{person}{Rachid Guerraoui}, \bibinfo{person}{Nikola
  Kne{\v{z}}evi{\'c}}, \bibinfo{person}{Vivien Qu{\'e}ma}, {and}
  \bibinfo{person}{Marko Vukoli{\'c}}.} \bibinfo{year}{2010}\natexlab{}.
\newblock \showarticletitle{The next 700 BFT protocols}. In
  \bibinfo{booktitle}{\emph{Proceedings of the 5th European Conference on
  Computer Systems (EuroSys)}}. \bibinfo{pages}{363--376}.
\newblock


\bibitem[\protect\citeauthoryear{Gueta, Abraham, Grossman, Malkhi, Pinkas,
  Reiter, Seredinschi, Tamir, and Tomescu}{Gueta et~al\mbox{.}}{2019}]%
        {gueta2019sbft}
\bibfield{author}{\bibinfo{person}{Guy~Golan Gueta}, \bibinfo{person}{Ittai
  Abraham}, \bibinfo{person}{Shelly Grossman}, \bibinfo{person}{Dahlia Malkhi},
  \bibinfo{person}{Benny Pinkas}, \bibinfo{person}{Michael Reiter},
  \bibinfo{person}{Dragos-Adrian Seredinschi}, \bibinfo{person}{Orr Tamir},
  {and} \bibinfo{person}{Alin Tomescu}.} \bibinfo{year}{2019}\natexlab{}.
\newblock \showarticletitle{SBFT: a scalable and decentralized trust
  infrastructure}. In \bibinfo{booktitle}{\emph{2019 49th Annual IEEE/IFIP
  international conference on dependable systems and networks (DSN)}}. IEEE,
  \bibinfo{pages}{568--580}.
\newblock


\bibitem[\protect\citeauthoryear{Guo, Lu, Tang, Xu, and Zhang}{Guo
  et~al\mbox{.}}{2020}]%
        {guo2020dumbo}
\bibfield{author}{\bibinfo{person}{Bingyong Guo}, \bibinfo{person}{Zhenliang
  Lu}, \bibinfo{person}{Qiang Tang}, \bibinfo{person}{Jing Xu}, {and}
  \bibinfo{person}{Zhenfeng Zhang}.} \bibinfo{year}{2020}\natexlab{}.
\newblock \showarticletitle{Dumbo: Faster asynchronous bft protocols}. In
  \bibinfo{booktitle}{\emph{Proceedings of the 2020 ACM SIGSAC Conference on
  Computer and Communications Security}}. \bibinfo{pages}{803--818}.
\newblock


\bibitem[\protect\citeauthoryear{Hanke, Movahedi, and Williams}{Hanke
  et~al\mbox{.}}{2018}]%
        {hanke2018dfinity}
\bibfield{author}{\bibinfo{person}{Timo Hanke}, \bibinfo{person}{Mahnush
  Movahedi}, {and} \bibinfo{person}{Dominic Williams}.}
  \bibinfo{year}{2018}\natexlab{}.
\newblock \showarticletitle{Dfinity technology overview series, consensus
  system}.
\newblock \bibinfo{journal}{\emph{arXiv preprint arXiv:1805.04548}}
  (\bibinfo{year}{2018}).
\newblock


\bibitem[\protect\citeauthoryear{Jalalzai, Niu, and Feng}{Jalalzai
  et~al\mbox{.}}{2020}]%
        {jalalzai2020fast}
\bibfield{author}{\bibinfo{person}{Mohammad~M Jalalzai},
  \bibinfo{person}{Jianyu Niu}, {and} \bibinfo{person}{Chen Feng}.}
  \bibinfo{year}{2020}\natexlab{}.
\newblock \showarticletitle{Fast-HotStuff: A Fast and Resilient HotStuff
  Protocol}.
\newblock \bibinfo{journal}{\emph{arXiv preprint arXiv:2010.11454}}
  (\bibinfo{year}{2020}).
\newblock


\bibitem[\protect\citeauthoryear{Kokoris~Kogias, Malkhi, and
  Spiegelman}{Kokoris~Kogias et~al\mbox{.}}{2020}]%
        {kokoris2020asynchronous}
\bibfield{author}{\bibinfo{person}{Eleftherios Kokoris~Kogias},
  \bibinfo{person}{Dahlia Malkhi}, {and} \bibinfo{person}{Alexander
  Spiegelman}.} \bibinfo{year}{2020}\natexlab{}.
\newblock \showarticletitle{Asynchronous Distributed Key Generation for
  Computationally-Secure Randomness, Consensus, and Threshold Signatures.}. In
  \bibinfo{booktitle}{\emph{Proceedings of the 2020 ACM SIGSAC Conference on
  Computer and Communications Security}}. \bibinfo{pages}{1751--1767}.
\newblock


\bibitem[\protect\citeauthoryear{Kotla, Alvisi, Dahlin, Clement, and
  Wong}{Kotla et~al\mbox{.}}{2007}]%
        {kotla2007zyzzyva}
\bibfield{author}{\bibinfo{person}{Ramakrishna Kotla}, \bibinfo{person}{Lorenzo
  Alvisi}, \bibinfo{person}{Mike Dahlin}, \bibinfo{person}{Allen Clement},
  {and} \bibinfo{person}{Edmund Wong}.} \bibinfo{year}{2007}\natexlab{}.
\newblock \showarticletitle{Zyzzyva: speculative byzantine fault tolerance}. In
  \bibinfo{booktitle}{\emph{Proceedings of twenty-first ACM Symposium on
  Operating Systems Principles (SOSP)}}. \bibinfo{pages}{45--58}.
\newblock


\bibitem[\protect\citeauthoryear{Kursawe}{Kursawe}{2002}]%
        {kursawe2002optimistic}
\bibfield{author}{\bibinfo{person}{Klaus Kursawe}.}
  \bibinfo{year}{2002}\natexlab{}.
\newblock \showarticletitle{Optimistic byzantine agreement}. In
  \bibinfo{booktitle}{\emph{21st IEEE Symposium on Reliable Distributed Systems
  (SRDS)}}. IEEE, \bibinfo{pages}{262--267}.
\newblock


\bibitem[\protect\citeauthoryear{Kursawe and Shoup}{Kursawe and Shoup}{2005}]%
        {kursawe2005optimistic}
\bibfield{author}{\bibinfo{person}{Klaus Kursawe} {and} \bibinfo{person}{Victor
  Shoup}.} \bibinfo{year}{2005}\natexlab{}.
\newblock \showarticletitle{Optimistic asynchronous atomic broadcast}. In
  \bibinfo{booktitle}{\emph{International Colloquium on Automata, Languages,
  and Programming (ICALP)}}. Springer, \bibinfo{pages}{204--215}.
\newblock


\bibitem[\protect\citeauthoryear{Loss and Moran}{Loss and Moran}{2018}]%
        {loss2018combining}
\bibfield{author}{\bibinfo{person}{Julian Loss} {and} \bibinfo{person}{Tal
  Moran}.} \bibinfo{year}{2018}\natexlab{}.
\newblock \showarticletitle{Combining Asynchronous and Synchronous Byzantine
  Agreement: The Best of Both Worlds.}
\newblock  (\bibinfo{year}{2018}).
\newblock


\bibitem[\protect\citeauthoryear{Lu, Lu, Tang, and Wang}{Lu
  et~al\mbox{.}}{2020}]%
        {lu2020dumbo}
\bibfield{author}{\bibinfo{person}{Yuan Lu}, \bibinfo{person}{Zhenliang Lu},
  \bibinfo{person}{Qiang Tang}, {and} \bibinfo{person}{Guiling Wang}.}
  \bibinfo{year}{2020}\natexlab{}.
\newblock \showarticletitle{Dumbo-mvba: Optimal multi-valued validated
  asynchronous byzantine agreement, revisited}. In
  \bibinfo{booktitle}{\emph{Proceedings of the 39th Symposium on Principles of
  Distributed Computing}}. \bibinfo{pages}{129--138}.
\newblock


\bibitem[\protect\citeauthoryear{Malkhi, Nayak, and Ren}{Malkhi
  et~al\mbox{.}}{2019}]%
        {malkhi2019flexible}
\bibfield{author}{\bibinfo{person}{Dahlia Malkhi}, \bibinfo{person}{Kartik
  Nayak}, {and} \bibinfo{person}{Ling Ren}.} \bibinfo{year}{2019}\natexlab{}.
\newblock \showarticletitle{Flexible byzantine fault tolerance}. In
  \bibinfo{booktitle}{\emph{Proceedings of the 2019 ACM Conference on Computer
  and Communications Security (CCS)}}. \bibinfo{pages}{1041--1053}.
\newblock


\bibitem[\protect\citeauthoryear{Miller, Xia, Croman, Shi, and Song}{Miller
  et~al\mbox{.}}{2016}]%
        {miller2016honey}
\bibfield{author}{\bibinfo{person}{Andrew Miller}, \bibinfo{person}{Yu Xia},
  \bibinfo{person}{Kyle Croman}, \bibinfo{person}{Elaine Shi}, {and}
  \bibinfo{person}{Dawn Song}.} \bibinfo{year}{2016}\natexlab{}.
\newblock \showarticletitle{The honey badger of BFT protocols}. In
  \bibinfo{booktitle}{\emph{Proceedings of the 2016 ACM SIGSAC Conference on
  Computer and Communications Security (CCS)}}. \bibinfo{pages}{31--42}.
\newblock


\bibitem[\protect\citeauthoryear{Momose, Cruz, and Kaji}{Momose
  et~al\mbox{.}}{2020}]%
        {cryptoeprint:2020:406}
\bibfield{author}{\bibinfo{person}{Atsuki Momose}, \bibinfo{person}{Jason~Paul
  Cruz}, {and} \bibinfo{person}{Yuichi Kaji}.} \bibinfo{year}{2020}\natexlab{}.
\newblock \bibinfo{title}{Hybrid-BFT: Optimistically Responsive Synchronous
  Consensus with Optimal Latency or Resilience}.
\newblock \bibinfo{howpublished}{Cryptology ePrint Archive, Report 2020/406}.
\newblock
\newblock
\shownote{\url{https://eprint.iacr.org/2020/406}.}


\bibitem[\protect\citeauthoryear{Pass and Shi}{Pass and Shi}{2018}]%
        {pass2018thunderella}
\bibfield{author}{\bibinfo{person}{Rafael Pass} {and} \bibinfo{person}{Elaine
  Shi}.} \bibinfo{year}{2018}\natexlab{}.
\newblock \showarticletitle{Thunderella: Blockchains with optimistic instant
  confirmation}. In \bibinfo{booktitle}{\emph{Annual International Conference
  on the Theory and Applications of Cryptographic Techniques}}. Springer,
  \bibinfo{pages}{3--33}.
\newblock


\bibitem[\protect\citeauthoryear{Shrestha, Abraham, Ren, and Nayak}{Shrestha
  et~al\mbox{.}}{2020}]%
        {nibesh2020optimality}
\bibfield{author}{\bibinfo{person}{Nibesh Shrestha}, \bibinfo{person}{Ittai
  Abraham}, \bibinfo{person}{Ling Ren}, {and} \bibinfo{person}{Kartik Nayak}.}
  \bibinfo{year}{2020}\natexlab{}.
\newblock \showarticletitle{On the Optimality of Optimistic Responsiveness}. In
  \bibinfo{booktitle}{\emph{Proceedings of the 2020 ACM SIGSAC Conference on
  Computer and Communications Security (CCS)}}. \bibinfo{pages}{839–857}.
\newblock


\bibitem[\protect\citeauthoryear{Spiegelman}{Spiegelman}{2020}]%
        {spiegelman2020search}
\bibfield{author}{\bibinfo{person}{Alexander Spiegelman}.}
  \bibinfo{year}{2020}\natexlab{}.
\newblock \showarticletitle{In Search for a Linear Byzantine Agreement}.
\newblock \bibinfo{journal}{\emph{arXiv preprint arXiv:2002.06993}}
  (\bibinfo{year}{2020}).
\newblock


\bibitem[\protect\citeauthoryear{Spiegelman and Rinberg}{Spiegelman and
  Rinberg}{2019}]%
        {spiegelman2019ace}
\bibfield{author}{\bibinfo{person}{Alexander Spiegelman} {and}
  \bibinfo{person}{Arik Rinberg}.} \bibinfo{year}{2019}\natexlab{}.
\newblock \showarticletitle{ACE: Abstract Consensus Encapsulation for Liveness
  Boosting of State Machine Replication}.
\newblock \bibinfo{journal}{\emph{arXiv preprint arXiv:1911.10486}}
  (\bibinfo{year}{2019}).
\newblock


\bibitem[\protect\citeauthoryear{Yin, Malkhi, Reiter, Gueta, and Abraham}{Yin
  et~al\mbox{.}}{2019}]%
        {yin2019hotstuff}
\bibfield{author}{\bibinfo{person}{Maofan Yin}, \bibinfo{person}{Dahlia
  Malkhi}, \bibinfo{person}{Michael~K Reiter}, \bibinfo{person}{Guy~Golan
  Gueta}, {and} \bibinfo{person}{Ittai Abraham}.}
  \bibinfo{year}{2019}\natexlab{}.
\newblock \showarticletitle{Hotstuff: Bft consensus with linearity and
  responsiveness}. In \bibinfo{booktitle}{\emph{Proceedings of the 2019 ACM
  Symposium on Principles of Distributed Computing (PODC)}}.
  \bibinfo{pages}{347--356}.
\newblock


\end{thebibliography}

\appendix


\end{document}